\def\rainfty{\rightarrow\infty}
\def\diag{\text{diag}}
\def\sgn{\text{sgn}}
\newcommand{\mathletter}[1]{%
	\expandafter\newcommand\csname b#1\endcsname{\mathbb #1}
	\expandafter\newcommand\csname c#1\endcsname{\mathcal #1}
	\expandafter\newcommand\csname f#1\endcsname{\mathfrak #1}
	\expandafter\newcommand\csname til#1\endcsname{\widetilde #1}
	\expandafter\newcommand\csname ha#1\endcsname{\widehat #1}
	\expandafter\newcommand\csname bf#1\endcsname{\bf #1}
}%
\def\mathletters#1{\mathlettersB #1,,}
\def\mathlettersB#1,{\ifx,#1,\else\mathletter #1\expandafter\mathlettersB\fi}
\def\bee{\begin{equation}}
	\def\ene{\end{equation}}
\def\beq{\begin{eqnarray}}
	\def\enq{\end{eqnarray}}
\def\bmatri{\begin{bmatrix}}
\def\ematri{\end{bmatrix}}
\newtheorem{theo}{Theorem}
\newenvironment{proof}{\begin{IEEEproof}}{\end{IEEEproof}}
\title{Flight Control for UAV Loitering Over a Ground Target with Unknown Maneuver\\}
\author{Fei~Dong, Keyou~You,~\IEEEmembership{Senior Member,~IEEE}, Jiaqi~Zhang
	\thanks{*This work was supported  in part by the National Natural Science Foundation of China under Grant  61722308 and in part by the National Key  Research and Development Program of China under Grant 2017YFC0805310. ({\em Corresponding author: Keyou You})}
	 \thanks{The authors are with the Department of Automation, and Beijing National Research Center for Info. Sci. \& Tech. (BNRist), Tsinghua University, Beijing 100084, China. E-mail: \{dongf17, zjq16\}@mails.tsinghua.edu.cn, youky@tsinghua.edu.cn.}%
}
\begin{document}

\maketitle

\begin{abstract}
This paper proposes a flight controller for an unmanned aerial vehicle (UAV) to loiter over a ground moving target (GMT). We are concerned with the scenario that the stochastically time-varying maneuver of the GMT is {\em unknown} to the UAV, which renders it challenging to estimate the GMT's motion state.  Assuming that the state of the GMT is available, we first design a discrete-time Lyapunov vector field for the loitering guidance and then design a discrete-time integral sliding mode control (ISMC) to track the guidance commands. By modeling the maneuver process as a finite-state Markov chain, we propose a Rao-Blackwellised particle filter (RBPF),  which only requires a few number of particles, to simultaneously estimate the motion state and the maneuver of the GMT with a camera or radar sensor. Then, we apply the principle of certainty equivalence to the ISMC and obtain the flight controller for completing the loitering task. Finally, the effectiveness and advantages of our controller are validated via simulations.

\end{abstract}

\begin{IEEEkeywords}
Loitering, UAV, GMT with unknown maneuver,  Lyapunov guidance vector field, sliding mode control, particle filter
\end{IEEEkeywords}

\section{Introduction}
With the development of unmanned aerial vehicles (UAVs), using an UAV to track a ground moving target (GMT) has become an important trend  in both military and civilian applications, such as surveillance, border patrol, and convoy \cite{Zhang2010Vision,Ding2010Multi,Oliveira2016Moving,Meng2017Decentralized}.  To provide better aerial monitoring for ground intruders,  the UAV is required to loiter over the GMT with a desired distance.  In addition, a constant distance between an object and the camera sensor in the UAV can dramatically improve the quality of the vision data. The objective of this work is to design a flight controller for the UAV to loiter over a GMT with unknown maneuver. To achieve it, we need to address at least three challenging issues. 

The first is how to design guidance commands for the UAV to loiter over the GMT. Assuming that both the motion states of the GMT and the UAV are known, a geometrical approach has been exploited to design the guidance trajectory by analyzing the geometry relationship between the GMT and the UAV \cite{Dobrokhodov2008Vision}.  Obviously, this is of physical significance and is easy to understand. However, it is unable to provide many important kinematic variables, except the desired trajectory. For instance, it is unclear how to design the guidance command of the UAV's heading speed. To overcome this limitation, an approach using the continuous-time Lyapunov guidance vector field has been adopted in  \cite{Frew2007Lyapunov,Li2011Vision}. This approach guarantees that the guidance trajectory asymptotically converges to a circular orbit over the GMT with a desired radius at a certain speed. While the above mentioned approaches are for the continuous-time case, this work designs the {\em discrete-time} guidance commands by also using the Lyapunov vector field approach. This is generally more difficult than its continuous-time version. In fact, to ensure the effectiveness of the discrete-time commands,  the sampling frequency should be faster than an explicit lower bound, which is proportional to the maximum angular speed of the UAV as derived in this work. Clearly, there is no such an issue for the  continuous-time case. 

 The second is how to design a  robust controller for the UAV to track the discrete-time guidance commands in the presence of disturbances.  If the exact motion state of the GMT is available,  the proportional or proportional-derivative (PD) feedback laws are commonly used in  the continuous-time case, see e.g. \cite{Frew2007Lyapunov,Li2011Vision,Summers2009Coordinated,Zhang2010Vision,Pan2018Efficient,Kapitanyuk2017A,Loria2016Leader}. In \cite{Frew2007Lyapunov,Frew2008Coordinated}, the constant wind disturbances are considered, however, the wind velocities are known to the UAV. An adaptive estimator is designed to estimate the unknown constant wind velocities in \cite{Summers2009Coordinated}. A particular disturbance is studied in \cite{Xiao2017Target}, which is generated by a linear exogenous system with known structure and parameters. Accordingly, an estimator is proposed to handle this disturbance. Different from the PD control, the authors in \cite{Zhang2013Tracking} propose a tracking controller on the basis of the continuous-time sliding mode control (SMC) with a constant reaching law. It is worth mentioning that in \cite{Zhang2013Tracking}, a relative motion model between the GMT and the UAV is directly given by using their exact states. We refer the reader to \cite{Chwa2004Sliding,Shah2015Guidance} for an extended survey of the variable continuous-time SMC for path following problem. In this work, we design a discrete-time integral SMC (ISMC) \cite{Sarpturk1987On,Saaj2002A} via an integral sliding mode surface, and quantify how the sampling interval affects the tracking performance of the discrete-time  ISMC. Note that the designed guidance vectors can only be given {\em online}, i.e., guidance vectors after time $k$ are unavailable to the design of the $k$-th time input of the UAV. Advanced controllers do not always work, e.g., the model predictive control cannot be applied here as it relies on {\em future}  guidance vectors \cite{Camacho2004Model}.
	
 
The third is how to effectively estimate the motion state of the GMT with unknown maneuver, especially when the maneuver is stochastically time-varying. Note that the state of the UAV can usually be obtained by its position and orientation systems. The  estimation problem of the GMT state with known maneuver has been well studied by using a camera sensor or a radar sensor \cite{Zhang2010Vision,Dobrokhodov2008Vision,Ghommam2016Quadrotor}.  This can be easily solved via a nonlinear filter, e.g., the extended Kalman filter (EKF) \cite{Zhang2010Vision}. However, it is not sufficient to directly use an EKF to estimate the GMT state with unknown maneuver since this further introduces uncertainties to the dynamics of the GMT. To address it, we consider a stochastically time-varying maneuvering process  \cite{Li2004Survey}, and  model it as a finite-state Markov chain, whose state is introduced to represent a maneuver mode. For brevity and without loss of generality, we only consider three maneuver modes: keep straight, turn left and turn right of the GMT. Then, we design a Rao-Blackwellised particle filter (RBPF) \cite{zhang2018bayesian}  to simultaneously estimate the maneuver and the motion state of the GMT. The implementation and comparison between the standard PF and RBPF are well documented in \cite{doucet2001SMCintroduction,doucet2000rao,Gustafsson2002Particle}. 

The RBPF is provided to approximate the posterior distribution of the maneuver state, which is ternary valued and requires only a few number of particles. In all simulations,  we illustrate that 100 particles are sufficient  to achieve favorable estimation performance. Another advantage of the RBPF is that we do not need to access the exact transition probabilities between the maneuver modes. Once the  maneuver is known, we simply use the  EKF to estimate the motion state of the GMT. Thus, our filter exploits the advantages of both the RBPF and EKF.  This idea has been presented in the preliminary version of this work in \cite{dong2017}. Then, we adopt the principle of certainty equivalence \cite{bertsekas1995dynamic} and directly replace the true state of the GMT in the discrete-time guidance law and the discrete-time ISMC by its estimated version from the RBPF. 

Overall, our flight controller consists of three main components: (a) the discrete-time guidance commands for the desired loitering pattern; (b) the discrete-time ISMC; (c)  the RBPF to simultaneously estimate the maneuver modes and the state of the GMT.  The effectiveness of our controller is validated via simulation results.  

The rest of the paper is organized as follows. In Section II, the problem under consideration is formulated in details. Particularly, we explicitly describe the desired loitering pattern between the UAV and the GMT.  In Section \ref{secgui}, the guidance commands are designed by using the approach of the discrete-time  Lyapunov vector field.  In Section \ref{sec_ISMC}, we provide the discrete-time ISMC to track the guidance commands. In Section \ref{sec_motion}, we show how to design the RBPF with a camera sensor and a radar sensor, respectively, to estimate the motion state of the GMT. Simulations are conducted in Section \ref{sec_simulation}. And, some concluding remarks are drawn in Section \ref{sec_conclusion}.

\section{Problem Formulation}
We are concerned with the design of a flight controller for a fixed-wing UAV to loiter over a ground moving target (GMT), whose maneuver is modeled as a randomly unknown process. Specifically, relative to the GMT,  the UAV is expected to track a desired circle over the GMT with a constant angular speed. See Fig.~\ref{figpos} for illustrations where the desired orbit is to be tracked by the UAV with a desired radius at a constant speed. Due to the randomly unknown  maneuver of the GMT, this work is substantially different from \cite{Zhang2010Vision} where the maneuver of the GMT is essentially zero. Clearly, this is very restrictive as in many real applications, where we are required to track an uncooperative GMT.  A notable example is that the GMT is an intruder whose  maneuver is obviously time-varying and unknown to the UAV. To reduce the probability of being tracked, the intruder may further apply random maneuvers. 

In this section, we describe the dynamical models of the GMT and UAV as well as the sensor models in the UAV. 
     
\begin{figure}[!t]
	\centerline{\includegraphics[width=0.8\linewidth]{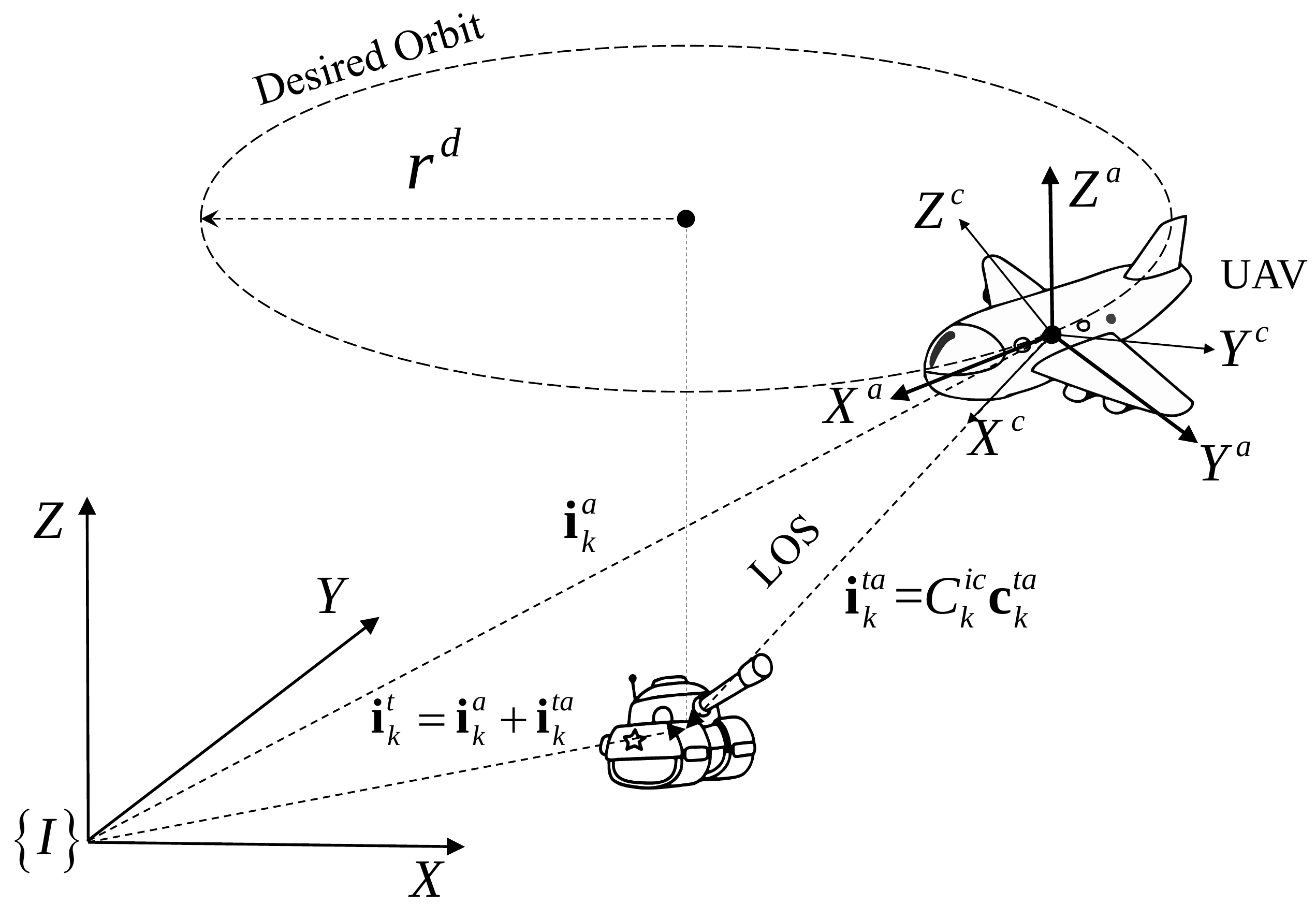}}
	\caption{Loitering over a ground target.}
	\label{figpos}
\end{figure}
	
	\subsection{Dynamical Model of the GMT  with Unknown Maneuver}
\label{subsec:GMT}
   The continuous-time version of the dynamical model of the GMT is given as
	\begin{equation}
		\begin{split}
			{\dot x^t} &= {v^{tx}}  \\
			{\dot y^t} &= {v^{ty}}  \\
			{\dot v^{tx}} &= u^{tx}+w^{tx}  \\
			{\dot v^{ty}} &= u^{ty}+w^{ty}
		\end{split}
	\end{equation}
	where $[{x^t},{y^t},{v^{tx}},{v^{ty}}]^T$ denotes the position and the velocity of the GMT on the horizontal plane respectively, and $[u^{tx},u^{ty}]^T$ is the maneuver of the GMT. Moreover, $[w^{tx},w^{ty}]^T$ represents the  random input noises, which are used to model the environmental disturbances.  
	
	Let $\tau$ be the sampling time interval. At time $k\tau$, we denote the state vector of the GMT as
	$$\bm{x}_{k}^t =  \bmatri{x_{k}^t}&{y_{k}^t}&{z_{k}^t}&{ v_{k}^{tx}}&{ v_{k}^{ty}}\ematri^T$$
		where $z^t_k$ denotes the coordinate of the GMT in $Z$-axis. Since the ground is usually not an ideal plane and may be subject to random fluctuations,  $z^t_k$ is time-varying as well. We model the  fluctuations as a Gaussian random process with zero mean. Then, the discrete-time dynamical model of the GMT is compactly given as
	\begin{equation}\label{gmtdyn}
		{\bm{x}}_{k + 1}^t = {{F}_k}\bm{x}_k^t + {{B}_k}\bm{u}^t({\gamma _k}) + G_k{\bm{w}_k^t}  
	\end{equation}
	where
	\begin{align*}
		&{F_k}{\rm{ = }}\bmatri
				1&0&0&{\tau}&0\\
				0&1&0&0&{\tau}\\
				0&0&1&0&0\\
				0&0&0&1&0\\
				0&0&0&0&1
			\ematri,~
			{B_k}{\rm{ = }}\bmatri
					{\tau^2/2}&0\\
					0&{\tau^2/2}\\
					0&0\\
					{\tau}&0\\
					0&{\tau}
				\ematri, ~\text{and}~ \\ 
				&~~~~~G_k{\rm{ = }}\bmatri
						{\tau^2/2}&0&0\\
						0&{\tau^2/2}&0\\
						0&0&{\tau}\\
						{\tau}&0&0\\
						0&{\tau}&0
					\ematri.
				\end{align*}
Moreover, $\{\bm{w}_k^t\} \subseteq {\mathbb{R}^{3}}$ are the white Gaussian input noises in X-axis and Y-axis, and the Gaussian fluctuations in Z-axis, i.e., ${\bm{w}_k^t}\sim {\mathcal N}(0,Q^t)$.
				
Different from \cite{Zhang2010Vision}, the maneuver of the GMT is generally non-zero and is unknown to the UAV.  In this work, the maneuver $\bm{u}^t({\gamma _k}) \in {\mathbb{R} ^{2}}$ is modeled as an unknown random vector  \cite{Li2004Survey}. Particularly, $\{\gamma _k\}$ is a three-state Markov chain, which corresponds to three modes of maneuver, e.g., keep straight, turn left, and turn right.  Note that our results can be easily generalized to the case with any finite number of states if computational resource is sufficient, and this number  directly imposes  constraints on the motion of the GMT. In the simulation, we also validate the case with nine states. 
				
Let $\mathcal{S} = \{ 1,2,3\}$ be the state space of the Markov chain $\{\gamma_k\}$, and denote its transition probability matrix by $P$, which actually can be estimated by the UAV with sensor measurements. For example, if we set 
				\begin{align} \label{eqtpmatrix}
					P  = \bmatri
							{0.9}&{0.05}&{0.05}\\
							{0.05}&{0.9}&{0.05}\\
							{0.05}&{0.05}&{0.9}
						\ematri,
					\end{align}
then the probability that the GMT continues to be in the state of moving forward  is 0.9 and  the probability that the GMT turns left or right from the state of moving forward is 0.05. If $P$ is unknown, we simply set each element of $P$ as $1/3$.   Without loss of generality, the GMT is initially set to be in the state of moving forward, i.e., $\gamma_0=1$.   
					
\subsection{Dynamical Model of the UAV}
We adopt a fixed-wing UAV to track the GMT, which is able to automatically keep its orientation stable. Compared to the flying altitude of the UAV, the fluctuations of the GMT in Z-axis is clearly small.  Thus, we are only interested in the scenario that the motion of the UAV  is restricted to a horizontal plane with a constant altitude. In \cite{Liu2015Discrete}, it provides a discrete-time dynamical model of a unicycle with a {\em constant} forward velocity. In this work, the forward velocity also needs to be controlled. Then, the discrete-time dynamical model of the UAV on the horizontal plane is described as
\begin{equation} \label{equ4}
\begin{split}
x_{k + 1}^a& = x_k^a + \frac{1}{ \tilde{u}_k^{a\psi } }\bigg (v_k^a(2\cos ( {\psi _k^a + \frac{ \tilde{u}_k^{a\psi} \tau}{2}} ){\rm{sin}}\frac{ \tilde{u}_k^{a\psi}\tau}{2}) \\
&~~~~+ \tilde{u}_k^{av}\tau \sin (\psi _k^a + \tilde{u}_k^{a\psi }\tau )\\
&~~~~+ \frac{ \tilde{u}_k^{av}}{\tilde{u}_k^{a\psi}}( - 2{\sin}(\psi _k^a + \frac{ \tilde{u}_k^{a\psi }\tau }{2})\sin\frac{ \tilde{u}_k^{a\psi }\tau}{2})\bigg ),  \\
y_{k + 1}^a& = y_k^a + \frac{1}{  \tilde{u}_k^{a\psi }}\bigg(2v_k^a (\sin(\psi _k^a + \frac{  \tilde{u}_k^{a\psi }\tau}{2})\sin\frac{ \tilde{u}_k^{a\psi }\tau}{2} ) \\
&~~~~- \tilde{u}_k^{av}\tau\cos (\psi _k^a + \tilde{u}_k^{a\psi }\tau) \\
&~~~~+ \frac{\tilde{u}_k^{av}}{\tilde{u}_k^{a\psi }}(2{\rm{cos}}(\psi _k^a + \frac{\tilde{u}_k^{a\psi }\tau}{2}){\rm{sin}}\frac{\tilde{u}_k^{a\psi }\tau}{2})\bigg),  \\
v_{{k} + 1}^a& =  v_{k}^a + {\tilde u}_k^{av}\tau,  \\
\psi _{k + 1}^a& = \psi _k^a + {\tilde u}_k^{a \psi } \tau,
\end{split}
\end{equation}
where $[x^a_k,{y^a_k},\psi_k^a,v^a_k]^T$  represents the coordinates, the heading direction, and the linear speed of the UAV on the plane, and   ${\tilde u}_k^{av} = {u}_k^{av} +{w}_k^{av}$ and ${\tilde u}_k^{a\psi} = {u}_k^{a\psi} +{w}_k^{a\psi}$ are the perturbed commanded acceleration (\si{m/{\s ^2}}) and turning rate (\si{rad/\s}) respectively. Here $\bm{w}_k^a := [w_k^{av},w_k^{a\psi}]^T \in \mathbb{R}^2$ represents the input disturbance, and is assumed to be bounded, i.e. 
\bee\label{bound}
|w_k^{av}| \le w^v,~ |w_k^{a\psi}| \le w^\psi.
\ene
The motion of the UAV is also restricted \cite{Ren2004Trajectory,Kang2009Linear,Beard2014Fixed}, e.g. \(\psi^a_k  \in [ - \pi ,\pi)\), and 
\begin{align}
u_k^{a\psi}\in\mathcal{U}^{a\psi} &:= \{ u^{a\psi} \in \mathbb{R}: |u^{a\psi}| \le u_{\max}^{a\psi}\} \label{maxanguspeed}
\end{align} 
where $u_{\max}^{a\psi}$ denotes the maximum angular speed of the UAV.

\subsection{Sensor Models in the UAV}
To complete the loitering task,  the UAV needs to be equipped with some necessary sensors. We first consider the camera sensor, and then the radar sensor, both of which are common in applications. 
 
\subsubsection{Camera Sensor}
A vision camera is amounted on a gimbal platform, which is to  adjust the camera to keep the line of sight (LOS) towards the target. By using the controller in \cite{Dong2016A}, the gimbal platform can maintain the GMT in the field of vision (FOV) of the camera to avoid the target loss. Thus, we directly assume that the GMT is always in the FOV of the vision camera, and is treated as a pure mass point. 

The camera projects a target point ${\bm p}:=[x,y,z]^T$ in 3D coordinates into a pixel point $[b,c]^T$ on the 2D camera image plane,  
see Fig.~\ref{figcam} for illustrations. 
\begin{figure}[t!]
	\centerline{\includegraphics[width=0.8\linewidth]{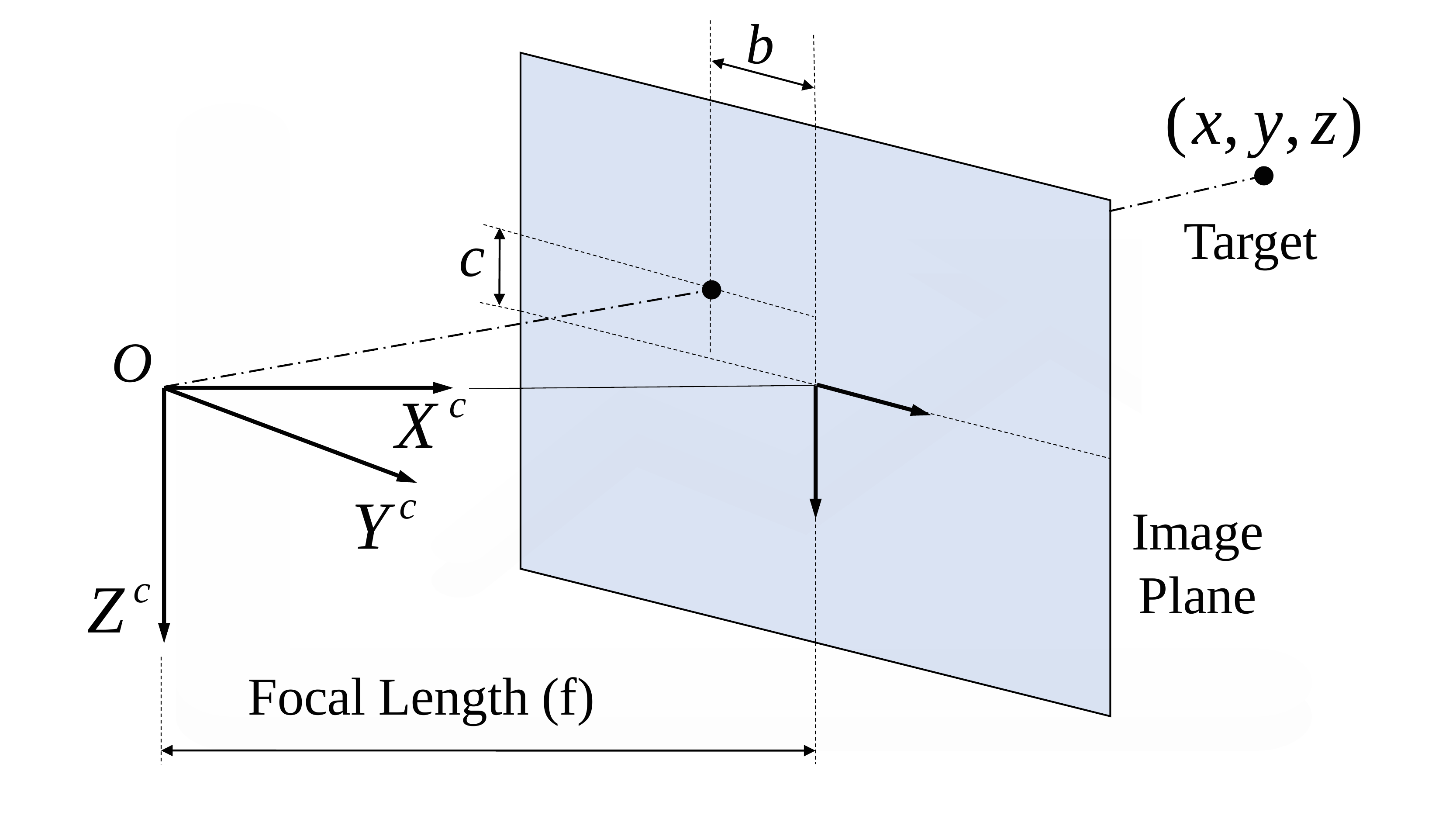}}
	\caption{Camera perspective projection model.}
	\label{figcam}
\end{figure}
The measurement model \cite{Zhang2010Vision} of the camera is thus expressed as 
\begin{equation} \label{eqcam}
	{\bm h}^c({\bm p}):=\begin{bmatrix}
			b\\
			c
			\end{bmatrix} = \frac{f}{x}\begin{bmatrix}y\\
		z
	\end{bmatrix}
\end{equation}
where $f$ is the camera focal length, and the image processing software can automatically produce the coordinate $[b,c]^T$ once the target is locked. To improve the image quality, we need to keep the object distance invariant.   

The coordinate $x=0$ corresponds to that the target point $[x, y, z]^T$ is exactly on the camera image plane, which is impossible here. Moreover, the Jacobian matrix of ${\bm h}^c({\bm p})$ is easily computed as
\begin{equation} 
J^c({\bm p})=\frac{f}{x^2}\begin{bmatrix}-y&x&0 \\ -z &0&x\end{bmatrix}. \label{jcbcam}
\end{equation}

\subsubsection{Radar Sensor}

A radar sensor is able to provide  the range and azimuth measurements between the radar and the target.  Its measurement model can be expressed as
\begin{equation} \label{eqrada}
{\bm h}^r({\bm p}):=\bmatri
	d\\
	\varphi 
	\ematri = \bmatri
	{ ({x^2} + {y^2} + {z^2})^{1/2} }\\
	{\arctan ({y}/{x})}
	\ematri,
\end{equation}
where $d$ and the  angle $\varphi$ are range and azimuth  measurements of the radar. Similarly, the Jacobian matrix of ${\bm h}^r({\bm p})$ is easily computed as
\begin{equation} 
J^r({\bm p})=\begin{bmatrix}x/d&y/d&z/d \\ -y/(x^2+y^2) &x/(x^2+y^2)&0\end{bmatrix}. \label{jcbradar}
\end{equation}

\subsection{The Objective of This Work}

The main objective of this paper is to design a flight controller for the UAV to loiter over the GMT by using a camera or radar sensor. 

To achieve it,  we design a tracking system as shown in Fig.~\ref{figsys}, which is mainly composed of a discrete-time guidance law, a discrete-time controller, and a motion estimator.  It is worth mentioning that the state $\bm{x}_{k}^a$ of the UAV  can be directly obtained by the navigation system, e.g., the position and orientation system (POS). The POS always consists of an inertial measurement unit (IMU) and a Global Navigation Satellite System (GNSS). The control of the gimbal platform has also been separately studied in \cite{Dong2016A}.  Thus, both the gimbal control and navigation are not the focus of this work.

\begin{figure}[!t]
	\centerline{\includegraphics[width=0.8\linewidth]{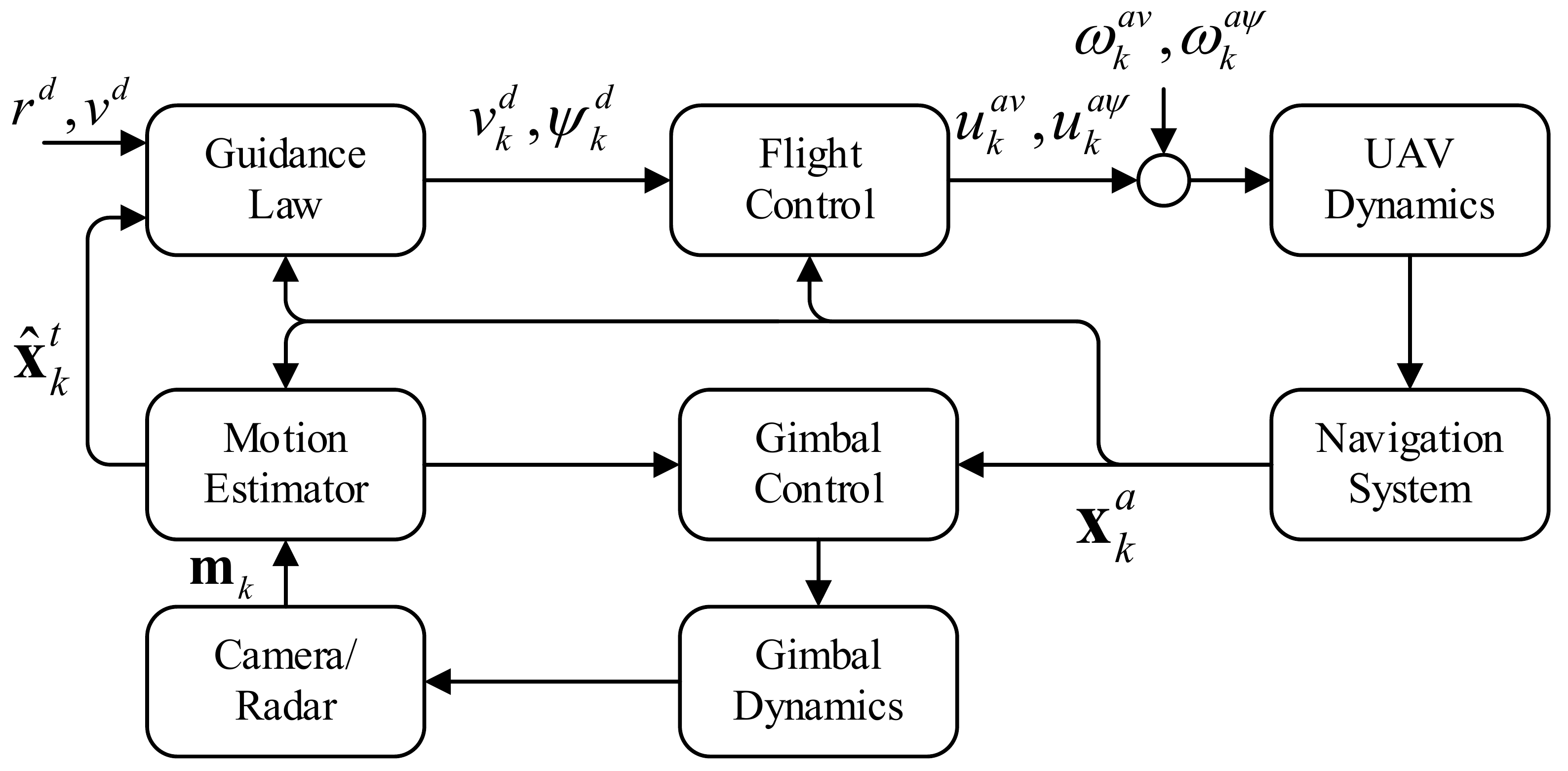}}
	\caption{Structure of the overall system.}
	\label{figsys}
\end{figure}

Overall,  the flight control problem of this work includes at least the following challenging issues:  (a) the discrete-time guidance law for the UAV, which extends the continuous-time guidance law in \cite{Zhang2010Vision} to the discrete-time case; (b) the control of the UAV to track the guidance trajectory. Since disturbances are unavoidable in the whole system,  the controller should be of sufficient robustness to uncertainties; (c) the motion state estimation of the GMT in the presence of the {\em unknown} maneuver, which requires to simultaneously estimate the maneuver and the motion state.

\section{Discrete-time Guidance Law} \label{secgui}
The discrete-time guidance law guides the UAV to loiter over the GMT with a desired radius at a relative constant speed. In \cite{Lawrence2003Lyapunov}, a continuous-time Lyapunov guidance vector field is proposed for a static target, which is extended to the case of a moving target with a constant velocity in  \cite{Frew2007Lyapunov} and \cite{Frew2008Coordinated}. Note that they all need the states of both the UAV and the GMT. Here, we design a discrete-time Lyapunov guidance vector field to direct the UAV to loiter over the GMT with a desired radius $r^d$ at a relative constant speed $v^d$.  

If the states of both the UAV and the GMT are available, let $x_k=x_k^a-x_k^t$ and $y_k=y_k^a-y_k^t$  represent the relative positions between the GMT and the UAV in X-axis and Y-axis, respectively. For a sampling period $\tau>0$, then 
$$v_k^x=\frac{x_{k+1}-x_k}{\tau}~\text{and}~v_k^y=\frac{y_{k+1}-y_k}{\tau}$$ are the relative velocities between the GMT and the UAV in X-axis and Y-axis.
\begin{figure}[!t]
	\centering		
	\includegraphics[width=0.8\linewidth]{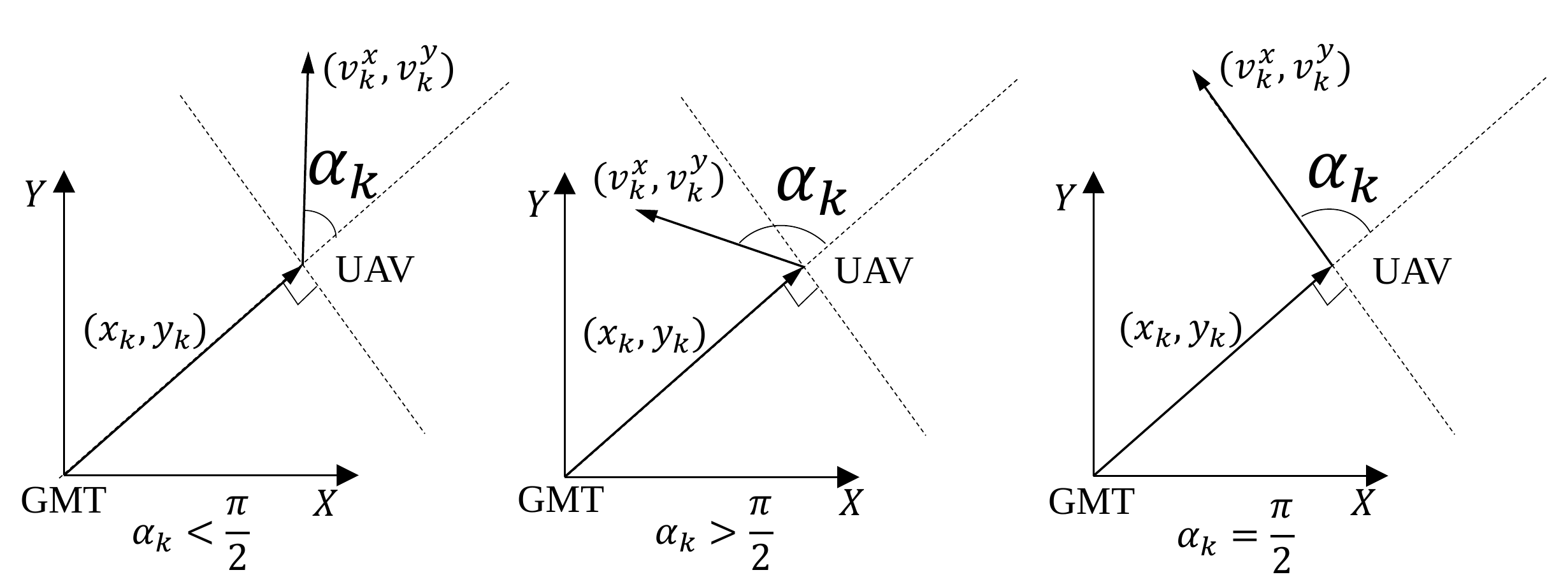}	
	\caption{Illustration of the discrete-time Lyapunov guidance vector method.}
	\label{figzuobiao}
\end{figure}

Given two positive constants $r^d$ and $v^d$, the discrete-time guidance vector is given by the following dynamical equation
\begin{equation} \label{eqlya}
\bmatri
	v_k^x\\
	v_k^y
	\ematri = \frac{{ - v^d}}{r_k\big((r_k)^2 + (r^d)^2\big)} 
\bmatri
	{{x_k}\big((r_k)^2 - (r^d)^2\big)+ {y_k}(2r^d r_k)}\\
	{{y_k}\big((r_k)^2 - (r^d)^2\big)- {x_k}(2r^d r_k)}
	\ematri 
\end{equation}
where  $r_{k} = ({(x_k)^2} + {(y_k)^2})^{1/2}$ denotes the projected relative distance between the GMT and the UAV onto the X-Y plane. 

Moreover, if the sampling interval $\tau$ is sufficiently small or the sampling frequency $1/\tau$ is fast enough, and the desired angular speed $v^d/r^d$ is not too large, i.e.,  
\bee
\frac{v^d}{r^d}<  u_{\max}^{a\psi}~\text{and}~\frac{1}{\tau}>\frac{\sqrt{3}u_{\max}^{a\psi}}{2}  \label{keyineq}
\ene
where $u_{\max}^{a\psi}$, defined in (\ref{maxanguspeed}), is the maximum angular speed of the UAV, then we show below that $r^d$  is  the desired radius of the loitering orbit, and $v^d$  is the desired loitering speed.

 To this end, we first provide an intuitive explanation on the discrete-time guidance vector in (\ref{eqlya}).  Let $ \alpha_k$  denote the angle between the relative position vector $[x_k, y_k]^T$ and relative velocity vector $[v^x_k, v^y_k]^T$. Then, its cosine can be computed as
\beq
\cos \alpha_k  &=& \frac{x_kv^{x}_ k+y_kv^{y}_ k}{\sqrt{(x_k)^2+(y_k)^2}{\sqrt{(v^x_ k)^2+(v^y_ k)^2}}} \nonumber\\
&=& - \frac{(r_k)^2 - (r^d)^2}{(r_k)^2 +( r^d)^2}\label{cosine}
\enq
where the first equality follows from the definition of the angle between two vectors, and the second equality is derived by using the design of $[v^{x}_ k,v^{y}_ k]^T$ in (\ref{eqlya}). 

If \(0<r _k< r^d\), it follows from (\ref{cosine}) that $\cos \alpha_k > 0$ and ${\pi }/{2}>\alpha_k >0$. Then, the projected relative distance \(r_k\) increases towards the desired radius. On the other side, if $r_k > r^d$,  then $\cos \alpha_k < 0$ and $\alpha_k > {\pi }/{2}$. This implies that \(r_k\) decreases towards the desired radius. If $r_k = r^d$, then $\cos \alpha_k = 0$, and $\alpha_k = {\pi }/{2}$, the relative velocity $[ v^x_k , v^y_k]^T$ is exactly orthogonal to the relative position vector $[x_k, y_k]^T$. Then, the UAV loiters over the GMT with the desired radius $r^d$  at the relative speed \(v^d\). See Fig. \ref{figzuobiao} for a graphical illustration. 

Now, we are in the position to provide a rigorous proof of the above observation. 
\begin{theo}  If the relative position $[x_k,y_k]^T$ and velocity $[v_k^x,v_k^y]^T$ on the X-Y plane are designed via the discrete-time guidance law (\ref{eqlya}) where $r^d$ and $v^d$ satisfy (\ref{keyineq}),  the UAV eventually loiters over the GMT with a desired radius $r^d$ at an angular speed $v^d/r^d$. 
\end{theo}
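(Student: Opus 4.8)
The plan is to treat the guidance law (\ref{eqlya}) together with the position updates $x_{k+1}=x_k+\tau v_k^x$ and $y_{k+1}=y_k+\tau v_k^y$ as an autonomous discrete-time system in the relative coordinates $(x_k,y_k)$, and to reduce the loitering claim to two scalar facts: the relative speed is constant and the relative distance $r_k$ converges to $r^d$. First I would verify the constant-speed property. Writing $A_k=(r_k)^2-(r^d)^2$ and $B_k=2r^d r_k$, the right-hand side of (\ref{eqlya}) is a scaled rotation of $[x_k,y_k]^T$, so a direct computation gives $(v_k^x)^2+(v_k^y)^2=\frac{(v^d)^2}{(r_k)^2((r_k)^2+(r^d)^2)^2}(r_k)^2(A_k^2+B_k^2)$, and since $A_k^2+B_k^2=((r_k)^2+(r^d)^2)^2$ this collapses to $(v_k^x)^2+(v_k^y)^2=(v^d)^2$. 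Hence the UAV always moves relative to the GMT at the prescribed speed $v^d$ independently of $r_k$; this settles the constant-speed content of the claim and is what makes the angular-speed computation clean later.

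Next I would derive a scalar recursion for the relative distance. Squaring the position update and using $x_kv_k^x+y_kv_k^y=-v^d r_k A_k/((r_k)^2+(r^d)^2)$ (which follows from the same rotation identity) yields
\begin{equation*}
(r_{k+1})^2=(r_k)^2-\frac{2\tau v^d r_k\big((r_k)^2-(r^d)^2\big)}{(r_k)^2+(r^d)^2}+\tau^2(v^d)^2 .
\end{equation*}
Writing $E_k=(r_k)^2-(r^d)^2$ this takes the affine form $E_{k+1}=\lambda_k E_k+\tau^2(v^d)^2$ with $\lambda_k=1-\frac{2\tau v^d r_k}{(r_k)^2+(r^d)^2}$. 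The sign of the middle term already matches the intuition recorded after (\ref{cosine}): it pulls $r_k$ toward $r^d$, being negative when $r_k>r^d$ and positive when $r_k<r^d$.

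The core of the proof is establishing convergence of $\{r_k\}$ from this recursion, and this is where I expect the main difficulty, precisely because of the additive term $\tau^2(v^d)^2$ that has no continuous-time analogue: unlike the continuous Lyapunov field, a discrete step can overshoot $r^d$, so a naive monotone-descent argument for $V_k=(r_k-r^d)^2$ fails outright. I would instead exploit the contraction structure of $E_{k+1}=\lambda_k E_k+\tau^2(v^d)^2$. Since $\tfrac{2r_k}{(r_k)^2+(r^d)^2}\le\tfrac1{r^d}$ with equality at $r_k=r^d$, one has $\lambda_k\ge 1-\tfrac{\tau v^d}{r^d}$, and here the hypotheses (\ref{keyineq}) enter: combining $v^d/r^d<u_{\max}^{a\psi}$ with $1/\tau>\tfrac{\sqrt3}{2}u_{\max}^{a\psi}$ forces $\tfrac{\tau v^d}{r^d}<\tfrac{2}{\sqrt3}$, which keeps $\lambda_k$ strictly inside $(-1,1)$ and rules out unstable oscillation. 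The subtlety to handle carefully is that $\lambda_k$ depends on $r_k$ and tends to $1$ when $r_k$ is far from $r^d$, so the contraction is only local; I would therefore combine a local geometric-contraction estimate near the fixed point of the recursion with a far-field argument showing that the radial velocity component drives $r_k$ toward $r^d$ by a definite amount each step (one checks $r_{k+1}\approx r_k-\tau v^d$ for large $r_k$). This gives boundedness of $\{r_k\}$ and convergence of $E_k$ to the unique fixed point $E^\star$, with $E^\star=O(\tau)$, so that $r_k\to r^d$ as the sampling becomes fast. Finally, I expect that the same bound is exactly what keeps the per-step heading change, and hence the required turning rate, below $u_{\max}^{a\psi}$, which is what ties the constant $\sqrt3/2$ in (\ref{keyineq}) to the actuation limit (\ref{maxanguspeed}).

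With $r_k\to r^d$ and constant speed $v^d$ in hand, I would compute the angular speed by evaluating the per-step rotation of the position vector, $\cos(\theta_{k+1}-\theta_k)=\frac{x_kx_{k+1}+y_ky_{k+1}}{r_kr_{k+1}}$. At the limiting radius the cross terms reduce this to $\tan(\theta_{k+1}-\theta_k)=\tau v^d/r^d$, so the swept angle per step is $\arctan(\tau v^d/r^d)$ and the angular speed $\tfrac1\tau\arctan(\tau v^d/r^d)\to v^d/r^d$ as $\tau\to0$, which is the desired loitering rate and is feasible since $v^d/r^d<u_{\max}^{a\psi}$.
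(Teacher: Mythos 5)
Your route is genuinely different from the paper's, and in one respect more exact. The paper passes to polar coordinates and writes the guidance field as (\ref{eq19}), identifying the radial and tangential components of the displacement with $\Delta r_k$ and $r_k\Delta\theta_k$; it then runs a discrete Lyapunov argument on $V(r_k)=\tfrac12((r_k)^2-(r^d)^2)^2$, and the hypothesis (\ref{keyineq}) enters solely to make the discriminant of the quadratic (\ref{quad}) negative, so that $\Delta V<0$ away from $r^d$ and the discrete Lyapunov theorem gives $r_k\to r^d$ exactly. You instead keep the exact Cartesian update $x_{k+1}=x_k+\tau v_k^x$ and reduce everything to the exact affine recursion $E_{k+1}=\lambda_kE_k+\tau^2(v^d)^2$; your algebra (the rotation identity $A_k^2+B_k^2=((r_k)^2+(r^d)^2)^2$, the constant-speed property, the inner product $x_kv_k^x+y_kv_k^y$) is all correct. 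The two analyses disagree precisely because (\ref{eq19}) takes $r_{k+1}-r_k$ to be the radial component of the displacement, which is only a first-order approximation of the true $\Delta r_k$: under the exact update your recursion has its fixed point at $E^\star=\tau v^dr^d+o(\tau)\neq 0$, i.e.\ the orbit radius settles near $r^d+\tau v^d/2$, not at $r^d$. So your conclusion ($r_k\to r^d$ only as $\tau\to0$, with an $O(\tau)$ steady-state bias) is formally weaker than the theorem as stated, but it is the honest answer for the exact discrete dynamics; the paper's exact convergence is inherited from the linearized polar decomposition rather than from the guidance law itself. This discrepancy is worth stating explicitly if you write your version up.

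Two gaps remain on your side. First, convergence of the nonautonomous recursion is only sketched: $\lambda_k\to1$ both as $r_k\to\infty$ and as $r_k\to0$, so you genuinely need the two-regime argument you outline — a uniform per-step decrease $r_{k+1}\approx r_k-\tau v^d$ in the far field, a finite escape from the region $E_k\le 0$ (where $E_{k+1}-E_k\ge\tau^2(v^d)^2$), and then a uniform contraction constant on a compact annulus around $r^d$ — before you can conclude $E_k\to E^\star$; none of this is difficult, but it is not yet a proof. Second, your use of (\ref{keyineq}) does not actually consume the constant $\sqrt3/2$: keeping $\lambda_k>-1$ only needs $\tau v^d/r^d<2$, and your closing guess that $\sqrt3/2$ is tied to the heading-rate saturation is not how the paper uses it — there it is exactly the condition under which (\ref{quad}) is positive for all $r_k$. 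If you want your argument to require the hypothesis as stated, the natural place is a quantitative lower bound on $1-\lambda_k$ over the invariant annulus.
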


\begin{proof}
Let the coordinates $x_k$ and $y_k$ be converted to the polar coordinates $r_k$ and $\theta_k$ by using the trigonometric functions, i.e., 
\begin{align*}
x_k &= r_k\cos \theta_k,  \\
y_k &= r_k\sin \theta_k . 
\end{align*}
Then, the discrete-time vector field of \eqref{eqlya} in the polar coordinates is given as
\begin{equation} \label{eq19}
\bmatri
	{\Delta r_k}\\
	{{r_k}\Delta {\theta _k}}
	\ematri = -v^d {\tau}\bmatri
	\displaystyle{\frac{(r_k)^2 - (r^d)^2}{(r_k)^2 + (r^d)^2}}\\
	\displaystyle{\frac{-2r^d r_k}{(r_k)^2 + (r^d)^2}}
	\ematri
\end{equation}
where $\Delta r_k=r_{k+1}-r_k$ and $\Delta {\theta _k}=\theta_{k+1}-\theta_k$.

Consider the following Lyapunov function candidate
$$
{V(r_k)} = \frac{1}{2}  \left((r_k)^2 - (r^d)^2 \right)^2.
$$
Then, taking the difference of $V(r_k)$ along (\ref{eqlya}) leads to that
\begin{align}
\Delta {V(r_k)}  &= {V(r_{k+1})} - {V(r_k)}\nonumber \\
&= \frac{1}{2}\left((r_{k+1})^2 + (r_k)^2 - 2(r^d)^2\right)(r_{k+1} + r_k )(\Delta r_k) \nonumber \\
 &= \frac{1}{2}(r_{k + 1} + r_k)(\Delta r_k)^2 \big(2\frac{(r_k)^2 + (r^d)^2}{ - {v^d} \tau} +\nonumber  \\
 & ~~~~\Delta r_k + 2r_k\big)  \nonumber \\
 &=  - \frac{1}{2}(r_{k + 1} + r_k)(\Delta r_k)^2\big(2\frac{(r_k)^2 + (r^d)^2 }{{v^d} \tau } - 2 r_k + \nonumber \\
&~~~~v^d \tau \frac{(r_k)^2 - (r^d)^2}{(r_k)^2 + (r^d)^2}\big).\label{diffly}
\end{align}

Since $-1\le \displaystyle \frac{(r_k)^2 - (r^d)^2}{(r_k)^2 + (r^d)^2} <  1$, then
$v^d \tau \displaystyle \frac{(r_k)^2 - (r^d)^2}{(r_k)^2 + (r^d)^2} \ge  - v^d \tau,$
 which together with \eqref{diffly} implies that
\begin{align*}
\Delta {V(r_k)} &\le  
 - \frac{1}{2}(r_{k + 1} + r_k)(\Delta r_k^2)\frac{1}{v^d \tau} \big( 2(r_k)^2 -  2 {v^d}\tau r_k   \\
&~~~~  + 2(r^d)^2 -{({v^d}\tau )^2}  \big).
\end{align*}

Thus, the sign of $\Delta {V(r_k)}$ is determined by the following quadratic term
\begin{align}
2(r_k)^2 -  2 {v^d}\tau r_k  + 2(r^d)^2 -{({v^d}\tau )^2}.\label{quad}
\end{align}

By (\ref{keyineq}), one can easily verify that $
(-  2 {v^d}\tau )^2 - 4 \times 2\times\left (2(r^d)^2 -({v^d}\tau)^2\right) <0$, 
which implies that (\ref{quad}) is always positive for any $r_k$. 

By combining the above, it finally holds that $$\Delta V(r_k) \le 0.$$ 

Moreover,  $V(r_k)$ has the following three properties
\begin{itemize}
	\item (nonnegative) $V(r^d) = 0$ and $V(r_k) > 0$, $\forall r_k \not= r^d$,
	\item (strictly decreasing) $\Delta V(r_k) < 0$, $\forall r_k \not= r^d$,
	\item (radially unbounded) if $r_k\rightarrow \infty$, then $V(r_k) \rightarrow \infty$.
\end{itemize}

By the discrete-time version of Theorem 4.2 in \cite{Khalil2002Nonlinear}, then $r_k =r^d$ is an equilibrium point, which is also globally asymptotically stable. That is, $\lim_{k\rainfty}r_k=r^d$. This implies that the relative distance $r_k$ on the X-Y plane between the UAV and the GMT eventually converges to the desired radius $r^d$.

In view of (\ref{eq19}), it follows that $(\Delta r_k)^2 + (r_k \Delta \theta_k)^2 = (v^d \tau)^2$.  When the UAV is flying on the circular orbit, i.e. $\Delta r_k =0$, it only has the  tangential speed $(r_k \Delta \theta_k) / \tau$, which is equal to $v^d$. 
\end{proof}

\begin{figure}[t!]
	\centerline{\includegraphics[width=0.6\linewidth]{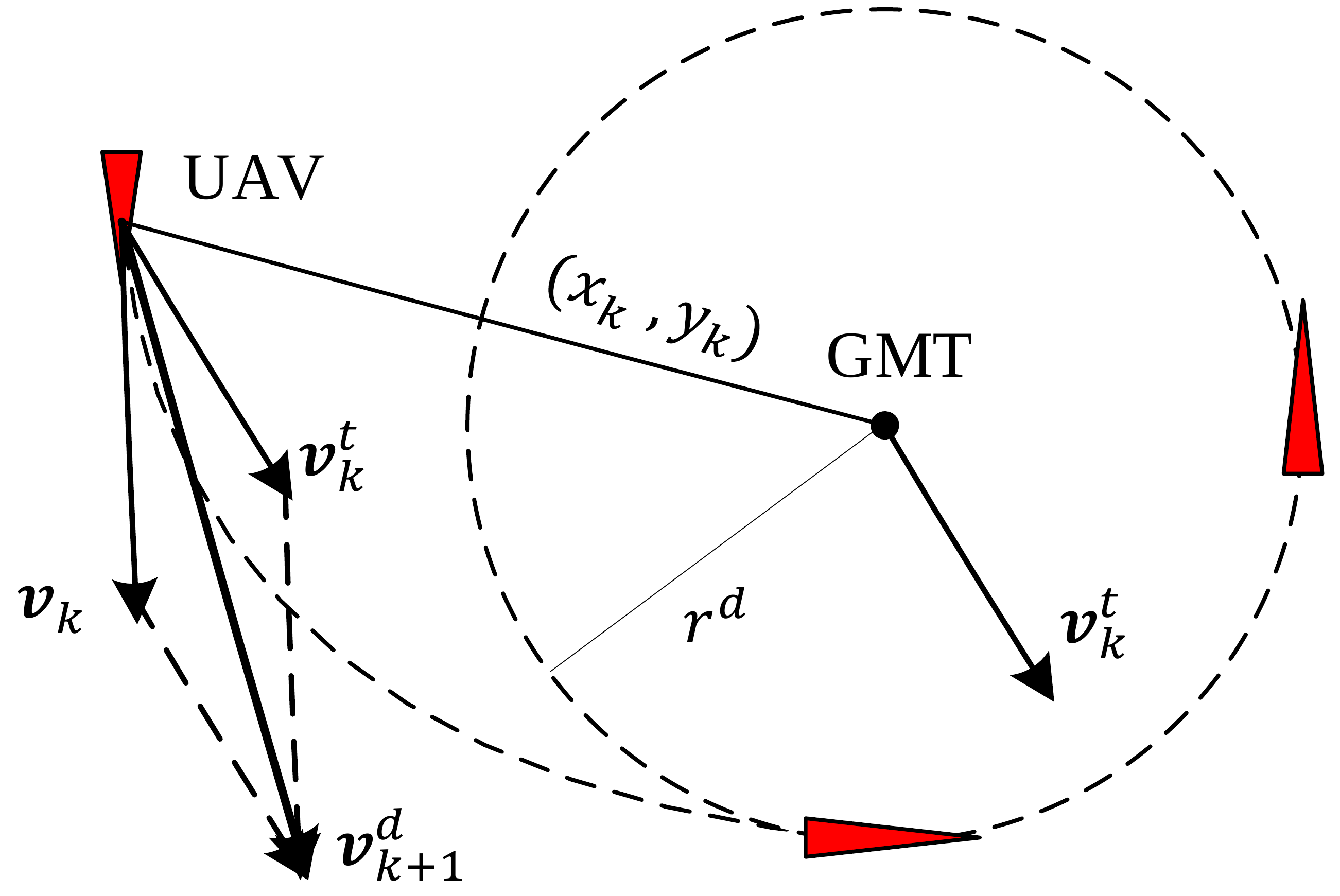}}
	\caption{The discrete-time guidance vector field.}
	\label{figlya}
\end{figure}

If the state of the GMT is known, the desired speed and heading angle of the UAV  in the inertial frame is given as
\begin{equation} \label{equmov}
\begin{split}
v^d_{k+1} &= \sqrt {(v_k^{tx}+v_k^x)^2  + ( v_k^{ty}+v_k^y )^2 }, \\
{\psi ^d_{k+1}} &= \arctan \frac{v_k^{ty}+v_k^y }{v_k^{tx}+v_k^x},\end{split}
\end{equation}
where $\bm{v}_k^t =[ v_k^{tx},v_k^{ty}]^T $ is the velocity of the GMT, and the relative velocity $\bm{v}_k=[ v_k^{x},v_k^{y}]^T$ on the X-Y plane is designed in (\ref{eqlya}). See Fig.~\ref{figlya} for illustrations. 

\section{Discrete-time Integral SMC with Perfect State of the GMT}
\label{sec_ISMC}

In this section, we design the controller for the UAV to asymptotically track the guidance commands $\{v^d_k\}$ and $\{\psi^d_k\}$ in (\ref{equmov}).  To handle the disturbance $\bm{w}_k^a$, we design a  discrete-time integral sliding model controller (ISMC), which is different from the continuous-time proportional-derivative (PD) controller in \cite{Zhang2010Vision,Frew2008Coordinated}. Usually, the ISMC has strong robustness to system uncertainties and external disturbances.

Define the  tracking errors of the desired speed and heading angle by
	\[{\bm{e}_k} := \bmatri
		{e_k^{v}}&{e_k^{\psi}}
		\ematri^T= \bmatri
		{v^a_k - v_k^d}\\
		{\psi^a_k  - \psi _k^d}
		\ematri
		\]
where $[v_k^d,\psi_k^d]^T$ is given in (\ref{equmov}). Together with the UAV dynamics in (\ref{equ4}), the dynamical equation of the tracking errors can be written as follows
 \beq 
	{\bm{e}_{k+1}} = \bmatri
	{e_{k+1}^{v}}\\
	{e_{k+1}^{\psi}}
	\ematri= \bmatri
	{e^v_k + u_k^{av}\tau +w_k^{av}\tau - \Delta v_k^d}\\
	{e^\psi_k + u_k^{a\psi}\tau +w_k^{a\psi}\tau - \Delta \psi_k^d}
	\ematri
\enq
where $\Delta v_k^d = v_{k+1}^{d} - v_{k }^{d}$ and $\Delta \psi_k^d = \psi_{k+1}^{d}- \psi_{k}^{d}$. Then, the  flight controller of the UAV is designed by using the ISMC, which is explicitly given below
\begin{equation} \label{equcon}
\bmatri
	{u_k^{av}}\\
	{u_k^{a\psi}}
	\ematri =
- W\sgn ({ \bm{s}_{k}}) - {M} \bm{s}_{k} - {C}\bm{e}_k  + \frac{1}{\tau}\bmatri
	{\Delta v_k^d }\\
	{\Delta \psi _k^d }
	\ematri 
\end{equation}
where the diagonal matrices ${W } = \diag({{w ^v}},{{w ^\psi}})$, ${M } = \diag({{m ^v}},{{m ^\psi}})$, and $C = \diag({c^v} ,{c^\psi})$  are to be designed, and the sign function ${\rm sgn} ({s})$ is defined as 
\begin{align*}
{\rm sgn} (s) = \left\{ {\begin{array}{*{20}{c}}
	{~~1,}&{s > 0,}\\
	{~~0,}&{s = 0,}\\
	{ - 1,}&{s < 0.}
	\end{array}}
	\right. 
\end{align*}

Moreover, the discrete-time integral sliding mode surface is designed as
\bee
\label{slidesurface}
{\bm {s}_k} = [{s_k^v}~~{s_k^\psi}]^T= \bm{e}_k +\tau {C}\sum\limits_{i = 0}^{k-1} {\bm{e}_i}. \ene

Before proving the effectiveness of the ISMC in (\ref{equcon}), we introduce some notations in this section. For a vector ${\bm x}$, then $|{\bm x}|$ takes the absolute values over every element of ${\bm x}$, $\sgn({\bm x})$ applies the sign function to each element of ${\bm x}$, and  $\diag({\bm x})$ is a diagonal matrix whose diagonal element  exactly corresponds to each element of ${\bm x}$. 
For two vectors ${\bm x}$ and ${\bm y}$, the relation ${\bm x}\succ {\bm y}$ (${\bm x}\prec {\bm y}$) means that each element of ${\bm x}$ is strictly greater (less) than the  element of the same position in ${\bm y}$.

\begin{theo} \label{thm_error}Consider the discrete-time dynamical model of the UAV in (\ref{equ4}).  Under the ISMC in (\ref{equcon}) and select the design parameters as follows:
$0<c^v, c^\psi<1/\tau$, $0< m^v, m^\psi<1/\tau$.  Then,  the UAV asymptotically tracks the guidance commands $[v_k^d,\psi_k^d]^T$ in (\ref{equmov}) with the tracking error $\bm{e}_k$ satisfying that
\begin{align*} 
\limsup_{k\rainfty}|\bm{e}_{k}|\prec 4 C^{-1}  (2 I-\tau M)^{-1}\bmatri {w^v }\\ {w^\psi} \ematri,
\end{align*}
where $w^v$ and $w^\psi$ are given in (\ref{bound}).
Moreover, if $2C=2M=\frac{1}{\tau} I$, then $$\limsup_{k\rainfty}|\bm{e}_{k}|\prec \frac{16}{3}\tau \bmatri {w^v }\\ {w^\psi} \ematri.$$
\end{theo}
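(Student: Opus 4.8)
The plan is to reduce the theorem to the ultimate boundedness of two decoupled scalar recursions. First I would substitute the ISMC law (\ref{equcon}) into the error dynamics. The crucial observation is that the feedforward term $\frac1\tau[\Delta v_k^d,\Delta\psi_k^d]^T$ in the control exactly cancels the drift $[\Delta v_k^d,\Delta\psi_k^d]^T$ in the error update, so that
$$\bm{e}_{k+1}=(I-\tau C)\bm{e}_k-\tau M\bm{s}_k-\tau W\sgn(\bm{s}_k)+\tau\bm{w}_k^a.$$
Next, using $\bm{s}_{k+1}-\bm{s}_k=\bm{e}_{k+1}-\bm{e}_k+\tau C\bm{e}_k$ from the surface definition (\ref{slidesurface}), the $\tau C\bm{e}_k$ terms cancel and the sliding variable obeys the autonomous (disturbance-driven) recursion
$$\bm{s}_{k+1}=(I-\tau M)\bm{s}_k-\tau W\sgn(\bm{s}_k)+\tau\bm{w}_k^a.$$
Since $C,M,W$ are diagonal and the bound (\ref{bound}) is componentwise, both recursions split into independent scalar channels, and I would run the rest of the argument on a single channel, writing $\bar w$ for $w^v$ (resp. $w^\psi$), and $m,c$ for $m^v,c^v$ (resp. the $\psi$ entries).

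Second, I would establish the ultimate bound on the sliding variable. For the scalar recursion $s_{k+1}=(1-\tau m)s_k-\tau\bar w\,\sgn(s_k)+\tau w_k$ with $|w_k|\le\bar w$ and $a:=1-\tau m\in(0,1)$, I would argue in two stages: a \emph{reaching} stage showing that while $|s_k|$ exceeds the boundary-layer level its magnitude strictly contracts (for $s_k>0$ one has $s_{k+1}\le a s_k$, and symmetrically for $s_k<0$), so $\{s_k\}$ is driven into a neighborhood of zero in finitely many steps; and a \emph{boundary-layer} stage quantifying the residual oscillation forced by the switching term together with the worst-case disturbance. The target is $\limsup_{k\rainfty}|s_k|\le \frac{2\tau\bar w}{2-\tau m}$, which is exactly the value needed downstream. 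Because the switching gain equals the disturbance bound (both $\bar w$), the surface is only marginally attractive and $s_k$ can chatter across zero; the candidate residual level is suggested by the worst sign-alternating two-step excursion $s_k>0\mapsto s_{k+1}<0\mapsto s_{k+2}>0$ under extremal $w_k$, whose period-two fixed point solves $p=aq+2\tau\bar w$, $q=ap-2\tau\bar w$ and gives $p=\frac{2\tau\bar w}{1+a}=\frac{2\tau\bar w}{2-\tau m}$. Turning this heuristic into a valid $\limsup$ bound for \emph{every} admissible disturbance sequence is the step I expect to be the main obstacle: a naive one-step envelope only yields the looser level $2\tau\bar w$, so the argument must exploit the correlation between consecutive steps (e.g. a two-step estimate on $|s_{k+1}|+a|s_k|$, or an invariant-interval / return-map analysis), and must treat $\sgn(0)=0$ carefully.

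Third, with $S:=\limsup_{k\rainfty}|s_k|$ in hand I would treat the error channel as a stable first-order recursion $e_{k+1}=(1-\tau c)e_k+d_k$ driven by $d_k=-\tau m s_k-\tau\bar w\,\sgn(s_k)+\tau w_k$, whose input is ultimately bounded by $\limsup|d_k|\le \tau m S+2\tau\bar w$. Since $0<c<1/\tau$ forces $0<1-\tau c<1$, the standard geometric-series estimate gives $\limsup_{k\rainfty}|e_k|\le \frac{\tau m S+2\tau\bar w}{\tau c}=\frac{mS+2\bar w}{c}$. Substituting $S=\frac{2\tau\bar w}{2-\tau m}$ collapses the numerator to $\frac{4\bar w}{2-\tau m}$, giving the per-channel bound $\frac{4\bar w}{c(2-\tau m)}$; reassembling the two channels in vector form reproduces $\limsup|\bm{e}_k|\prec 4C^{-1}(2I-\tau M)^{-1}[w^v,w^\psi]^T$. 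Finally, setting $2C=2M=\frac1\tau I$ (so $c=m=\frac1{2\tau}$) and simplifying $\frac{4\bar w}{c(2-\tau m)}$ yields $\frac{16}{3}\tau\bar w$, the special-case bound. The strictness of $\prec$ would follow from the strict inequalities in the reaching/boundary-layer estimates, the extremal oscillation being approached but not exactly sustained for genuine bounded disturbances.
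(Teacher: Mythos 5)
Your overall architecture matches the paper's: you derive the disturbance-driven sliding recursion $\bm{s}_{k+1}=(I-\tau M)\bm{s}_k-\tau W\sgn(\bm{s}_k)+\tau\bm{w}_k^a$, extract an ultimate bound on $|\bm{s}_k|$ (hence on $|\Delta\bm{s}_k|\prec 4\tau(2I-\tau M)^{-1}[w^v,w^\psi]^T$), and then sum the geometric series for $\bm{e}_{k+1}=(I-\tau C)\bm{e}_k+\Delta\bm{s}_k$; the final constants and the $\frac{16}{3}\tau$ specialization come out identically. The difference is entirely in how the middle step is justified: the paper uses the quadratic Lyapunov function $V^s(\bm{s}_k)=\frac1\tau\bm{s}_k^T\bm{s}_k$, shows the reaching condition $\diag(\bm{s}_{k+1}+\bm{s}_k)\sgn(\bm{s}_k)\succ 0$ holds whenever $|\bm{s}_k|\succ(2I-\tau M)^{-1}(2\tau[w^v,w^\psi]^T)$, and then invokes \cite{Du2016Chattering} to assert $|\bm{s}_k|$ stays below that level for all $k\ge k_0$; you instead propose a reaching-stage contraction plus a period-two return-map analysis of the boundary layer.

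The genuine gap is that you never actually establish $\limsup_{k\rainfty}|s_k|\le\frac{2\tau\bar w}{2-\tau m}$ — you explicitly leave it as a heuristic — and the route you sketch cannot close as stated, because the candidate boundary layer is \emph{not positively invariant}. Concretely, from $s_k=\epsilon>0$ arbitrarily small, the worst-case step $w_k=-\bar w$ gives $s_{k+1}=(1-\tau m)\epsilon-2\tau\bar w$, of magnitude approaching $2\tau\bar w>\frac{2\tau\bar w}{2-\tau m}$; and from $s_{k+1}\approx-2\tau\bar w$ an admissible disturbance can return the state to a tiny positive value in one step (choose $\tau w_{k+1}$ so that $(1-\tau m)s_{k+1}+\tau\bar w+\tau w_{k+1}=\epsilon'$, which is feasible within $|w|\le\bar w$), so this excursion can recur indefinitely. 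Hence for an adversarial bounded disturbance the achievable $\limsup|s_k|$ is $2\tau\bar w$, not the period-two fixed point $\frac{2\tau\bar w}{2-\tau m}$: the two-step fixed-point magnitude is a sustained-oscillation level, not an upper envelope of all admissible trajectories. So either the constant in your step two must be relaxed to $2\tau\bar w$ (which propagates to a different final constant than the theorem states), or a genuinely different invariance argument is needed. It is worth noting that the paper's own proof of (\ref{eq35}) delegates exactly this point to a citation and does not resolve the non-invariance either; but as a self-contained proof your proposal stalls precisely where you predicted it would, and the repair is not the two-step correlation estimate you suggest.
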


\begin{proof} Clearly, the discrete-time exponential reaching law is expressed as
\begin{align} \label{eq31}
\Delta{\bm {s}_{k }}= {\bm {s}_{k+1}} - {\bm {s}_{k }} =  -\tau W \sgn ({\bm {s}_k}) - \tau M {\bm {s}_k} .
\end{align}
Together with the ISMC in (\ref{equcon}), we can easily obtain that 
\begin{align} \label{eq32}
{\bm {s}_{k+1}}  =  (I - \tau M) {\bm {s}_k}  -\tau W \sgn ({\bm {s}_k}) + \tau \bm{w}_k^a.
\end{align}

Define a Lyapunov function candidate as $$V^s(\bm{s}_k) = \frac{1}{\tau}\bm{s}_k ^T \bm{s}_k.$$ 
Taking the difference of $V^s(\bm{s}_k)$ along (\ref{eq32}), we can obtain that 
 \begin{align}
 &\hspace{-0.5cm}\Delta V^s(\bm{s}_k) = V^s(\bm{s}_{k+1})-V^s(\bm{s}_k)\label{deltav}\\
 & =  (\bm{s}_{k+1} + \bm{s}_{k })^T
 \big(- M {\bm {s}_k}  - W \sgn ({\bm {s}_k})+\bm{w}_k^a\big)\nonumber \\
 &\le - (\bm{s}_{k+1} + \bm{s}_{k })^T
 (  M {\bm {s}_k})  - (\bm{s}_{k+1} + \bm{s}_{k })^T  W \sgn ({\bm {s}_k})\nonumber \\ 
 &~~~~+|\bm{s}_{k+1} + \bm{s}_{k }|^T( \bmatri w^v, w^\psi \ematri^T)\nonumber\\
 &= - (\bm{s}_{k+1} + \bm{s}_{k })^T( M {\bm {s}_k})\nonumber \\
 &~~~~- (\bm{s}_{k+1} + \bm{s}_{k })^T  W (\sgn ({\bm {s}_k})-\sgn ({\bm {s}_k}+{\bm {s}_{k+1}})).\nonumber
  \end{align}
  
 If $\diag(\bm{s}_{k+1} + \bm{s}_{k })  \sgn ({\bm {s}_k})\succ 0$, it holds that $(\bm{s}_{k+1} + \bm{s}_{k })^T \sgn ({\bm {s}_k})>0$. 
Together with (\ref{eq32}), we have 
  \begin{align*}
     &\diag(\bm{s}_{k+1} + \bm{s}_{k })  \sgn ({\bm {s}_k}) \\
      & =\diag \big ( (2I -\tau M) \bm s_k -\tau W \sgn(\bm s_k)+ \tau \bm w_k^a  \big) \sgn ({\bm {s}_k}) \\
      &\ge \diag\big( (2I -\tau M) \bm s_k \big) \sgn(\bm s_k) -2\tau( \diag(w^v,w^\psi)  ) |\sgn(\bm s_k)| \\
      &=  (2I -\tau M)  |\bm s_k| -2\tau  \bmatri w^v, w^\psi \ematri^T    
  \end{align*} 
 Thus, if $|{\bm s}_k|  \succ (2I-\tau M)^{-1}(2\tau \bmatri w^v, w^\psi \ematri^T)$, the reaching condition holds, i.e.,
 \beq
 \diag(\bm{s}_{k+1}+\bm{s}_k) \sgn(\bm{s}_k) \succ 0,    
 \enq
 which implies that $\sgn(\bm{s}_k)=\sgn(\bm{s}_{k+1}+\bm{s}_k)$. Together with (\ref{deltav}), it follows that $\Delta V^s(\bm{s}_k)<0$. That is, the boundary layer is attractive. By \cite{Du2016Chattering}, there exists a finite time $k_0$ such that  
 \beq \label{eq35}
|{\bm s}_k|  \prec (2 I-\tau M)^{-1}(2\tau \bmatri w^v, w^\psi \ematri^T), \forall k \ge {k_0}.
 \enq

Next, we show that the tracking error ${\bm e}_k$ will also be attracted to a bounded region.  To elaborate it, let $\Delta {\bm e}_k:= {\bm e}_{k+1}- {\bm e}_k$. It follows from (\ref{slidesurface}) that $$\bm{e}_k=\bm{s}_k -\tau{C}\sum\limits_{i = 0}^{k-1} {\bm{e}_i}$$
and
$\Delta \bm{e}_k=\bm{s}_{k+1}-\bm{s}_k-\tau C \bm{e}_k$. 

For any $k\ge k_0$, the above implies that
\begin{align}\label{errordyn}
\bm{e}_{k+1}&=(I-\tau C)\bm{e}_k+\Delta \bm{s}_k \nonumber \\
&=(I-\tau C)^{k+1}\bm{e}_0+\sum_{i=0}^{k}(I-\tau C)^{k-i} \Delta\bm{s}_i \nonumber \\
&=(I-\tau C)^{k+1}\bm{e}_0 +\sum_{i=0}^{k_0-1}(I-\tau C)^{k-i} \Delta\bm{s}_i \nonumber\\
&~~~~~ +  \sum_{i=k_0}^{k}(I-\tau C)^{k-i} \Delta\bm{s}_i.
\end{align}

In light of  (\ref{eq31}) and (\ref{eq35}), it is clear that 
\begin{align*}  
|\Delta{ \bm s}_k|  \prec 4\tau (2I-\tau M)^{-1}\bmatri {w^v }, {w^\psi} \ematri^T, \forall k \ge k_0.
\end{align*}
Thus, $ \Delta\bm{s}_i$ is uniformly bounded. Since the spectral radius of $I-\tau C$ is strictly less than one, it implies that  $\lim_{k\rainfty}(I-\tau C)^{k+1}\bm{e}_0=0$, $\lim_{k\rainfty}\sum_{i=0}^{k_0-1}(I-\tau C)^{k-i} \Delta\bm{s}_i=0$ and 
\begin{align*}  
&\left|\sum_{i=k_0}^{k}(I-\tau C)^{k-i} \Delta\bm{s}_i\right| \\
&\prec 4\tau \sum_{i=k_0}^{k}(I-\tau C)^{k-i}  (2I-\tau M)^{-1}\bmatri {w^v }, {w^\psi} \ematri^T \\
&\prec 4\tau \sum_{i=0}^{\infty}(I-\tau C)^{i}  (2I-\tau M)^{-1}\bmatri {w^v }, {w^\psi} \ematri^T \\
& \prec 4 C^{-1}  (2I-\tau M)^{-1}\bmatri {w^v }, {w^\psi} \ematri^T. 
\end{align*}

Together with (\ref{errordyn}), it follows that
\begin{align*} 
\limsup_{k\rainfty}|\bm{e}_{k+1}|\prec 4 C^{-1}  (2I-\tau M)^{-1}\bmatri {w^v }, {w^\psi} \ematri^T.
\end{align*}
The rest of the proof is trivial. 
\end{proof}
By Theorem \ref{thm_error}, the tracking error is proportional to the sampling period $\tau$ and the size of disturbances to the UAV, which clearly is consistent with our intuition.   

\section{Motion Estimation of the GMT with Unknown Maneuver}\label{sec_motion}
If the state of the GMT is perfectly known, we have designed the discrete-time guidance law and the ISMC for the UAV  in the previous sections. Since the GMT can be an intruder or enemy, it is impossible for the UAV to access its {\em exact} maneuver and thus the state cannot be accurately obtained. 
From the tracking system in Fig.~\ref{figsys}, it is clear that the motion  estimation is vital for the guidance law, the flight control and the gimbal control for stabilizing the camera sensor.  If the maneuver is known, the state estimation problem of the GMT is well studied by using the standard nonlinear filters, e.g., EKF, Unscented Kalman filter (UKF) or PF \cite{Arulampalam2002A}.   To address this case with unknown maneuver, we consider using a Markov chain to model the maneuver process in Section \ref{subsec:GMT}, and adopting our recently proposed Rao-Blackwellised particle filter (RBPF) \cite{zhang2018bayesian} to simultaneously estimate both the maneuver modes $\gamma_k$ and the state  of the GMT. Compared to the standard PF, the number of sampling particles for the RBPF is much smaller.    

\subsection{State Estimation of the GMT}
In virtue of (\ref{gmtdyn}), the discrete-time dynamical model of the GMT and the measurement equation are collectively given as 
\begin{equation}
\begin{split}
{\bm{x}}_{k + 1}^t &= {{F}_k}\bm{x}_k^t + {{B}_k}\bm{u}^t({\gamma _k}) + G_k{\bm{w}_k} \\
\bm{m}_k &= {\bm h}(\bm{x}_k^t) + \bm {v}_k
\end{split}
\end{equation}
where $\bm{m}_k\in\bR^2$ is the  measurement of the UAV by using a camera or a radar sensor and $\{\bm {v}_k\}$ is the measurement Gaussian white noise, i.e., ${\bm {v}_k}\sim {\mathcal N}(0,R)$. 

Define $\mathit{\Gamma} _k = \{ \gamma _0,...,\gamma _k\} $, $\mathcal{M}_k = \{ \bm{m}_0,...,\bm{m}_k\} $.   Recall that the minimum variance estimate of  $\bm{x}_k^t$ is given as
\begin{equation} \label{eq25}
\begin{split}
\bm{\widehat {x}}_{k|k}^{t} &= \int {\bm{x}_k^t} p(\bm{x}_k^t|\mathcal{M}_k)\mathrm{d}\bm{x}_k^t \\
&= \iint {\bm{x}_k^t} p(\bm{x}_k^t,\mathit{\Gamma} _{k - 1}|\mathcal{M}_k)\mathrm{d}\bm{x}_k^t \mathrm{d}\mathit{\Gamma} _{k - 1}.
\end{split}
\end{equation}

Notice that  $p(\bm{x}_k^t,\mathit{\Gamma} _{k - 1}|\mathcal{M}_k)$ is not Gaussian and is impossible to be analytically obtained. Thus, the integral is not computable and we have to resort to a numerical approach.

To exposit it, it follows from the law of total probability that 
\[p(\bm{x}_k^t,\mathit{\Gamma} _{k - 1}|\mathcal{M}_k) = p(\bm{x}_k^t|\mathit{\Gamma} _{k - 1},\mathcal{M}_k)p(\mathit{\Gamma} _{k - 1}|\mathcal{M}_k)\]
where $p(\bm{x}_k^t|\mathit{\Gamma} _{k - 1},\mathcal{M}_k)$ is approximately computed by the extended Kalman filter (EKF) in a recursive form \cite{anderson1979optimal}.

However, $p(\mathit{\Gamma} _{k - 1}|\mathcal{M}_k)$ is inherently difficult to obtain. As $\mathit{\Gamma} _{k - 1}$ is a ternary-valued sequence, we draw $n$ particles $\{ \mathit{\Gamma} _{k-1}^i\} _{i = 1}^n$ from an importance distribution $q(\mathit{\Gamma} _{k - 1}|\mathcal{M}_k)$ to approximately compute $p(\mathit{\Gamma} _{k - 1}|\mathcal{M}_k)$, i.e.,
\begin{equation} \label{eq26}
p(\mathit{\Gamma} _{k - 1}|\mathcal{M}_k) \approx \sum\limits_{i = 1}^n {\omega _{k-1}^i} \delta (\mathit{\Gamma} _{k - 1} - \mathit{\Gamma} _{k - 1}^i)
\end{equation}
where $\delta (\cdot)$ is the standard Dirac delta function and the normalized particle weight $\omega _{k - 1}^i$ is associated with $\mathit{\Gamma} _{k - 1}^{i}$, which is given as
\begin{equation} \label{eq27}
\omega _{k-1}^i \propto \frac{{p(\mathit{\Gamma} _{k - 1}^i|\mathcal{M}_k)}}{{q(\mathit{\Gamma} _{k - 1}^i|\mathcal{M}_k)}}.
\end{equation}
Combining \eqref{eq25} with \eqref{eq26}, we obtain that
\begin{align*}
\widehat{\bm x}_{k|k}^{t} \approx  \sum\limits_{i = 1}^n {\omega _{k-1}^i} \mathbb{E}[\bm{x}_k^t|\mathit{\Gamma} _{k - 1}^i,\mathcal{M}_k].
\end{align*}
Similarly, the estimation error covariance matrix is given as
\begin{align*}
{\Sigma} _{k|k}^{t} \approx \sum\limits_{i = 1}^n {\omega _{k-1}^i} \mathbb{E}[(\bm{x}_k^t -\widehat{\bm x}_{k|k}^{t}){(\bm{x}_k^t - \widehat {\bm x}_{k|k}^{t})^T}|\mathit{\Gamma} _{k - 1}^i,\mathcal{M}_k].   
\end{align*}

Let $\widehat {\bm x}_{k|k}^{i}:= \mathbb{E}[\bm{x}_k^t|\mathit{\Gamma} _{k - 1}^i,\mathcal{M}_k]$ and  ${\Sigma} _{k|k}^{i}:=\mathbb{E}[(\bm{x}_k^t - \widehat {\bm x}_{k|k}^{i}){(\bm{x}_k^t - \widehat {\bm x}_{k|k}^{i})^T}|\mathit{\Gamma} _{k - 1}^i,\mathcal{M}_k]$. It should be noted that $\widehat {\bm x}_{k|k}^{i}$ and $\Sigma _{k|k}^{i}$ can be approximately computed by the EKF, i.e., 
\begin{align} \label{measuupdate}
& \widehat {\bm x}_{k|k}^{i}\approx\widehat {\bm x}_{k|k-1}^{i} + K_k^i(\bm{m}_k - {\bm h}(\widehat{\bm x}_{k|k-1}^{i})) \nonumber  \\
& {\Sigma} _{k|k}^{i} \approx {\Sigma} _{k|k - 1}^i - K_k^iH_k^{i}{\Sigma} _{k|k - 1}^i
\end{align}
where $K_k^i= {\Sigma} _{k|k - 1}^i(H_k^{i})^T(H_k^{i}{\Sigma} _{k|k - 1}^i(H_k^{i})^T+R)^{-1}$ and the Jacobian matrix evaluated at $\widehat {\bm x}_{k|k-1}^{i}$ is
\begin{equation}\label{jacbimatrix}
H_k^{i}=\frac{\partial}{\partial \bm{x}_k^t  } {\bm h}(\bm{x}_k^t)|_{\bm{x}_k^t =\widehat {\bm x}_{k|k-1}^{i}}.
\end{equation}

The remaining problem is how to recursively generate particles $\{\mathit{\Gamma} _{k-1}^i\}$ and compute their associated weights $\{\omega _{k-1}^i\}$.

\subsection{Importance Distribution}

If an importance distribution is chosen to factorize such that 
\begin{equation} \label{eq28}
q(\mathit{\Gamma} _{k - 1}|\mathcal{M}_k) = q({\gamma _{k - 1}}|\mathit{\Gamma} _{k - 2},\mathcal{M}_k)q(\mathit{\Gamma} _{k - 2}|\mathcal{M}_{k - 1}).
\end{equation}

Then, the particles $\mathit{\Gamma} _{k - 1}^i\sim q(\mathit{\Gamma} _{k - 1}|\mathcal{M}_k)$ can be obtained by augmenting each existing particle $\mathit{\Gamma} _{k - 2}^i\sim q(\mathit{\Gamma} _{k - 2}|\mathcal{M}_{k - 1})$ with the new state $\gamma _{k - 1}^i\sim q({\gamma _{k - 1}}|\mathit{\Gamma} _{k - 1},\mathcal{M}_k)$, recursively \cite{zhang2018bayesian}. To elaborate it, we express $p(\mathit{\Gamma} _{k - 1}|\mathcal{M}_k)$ in the following form
\begin{align*}
& p(\mathit{\Gamma} _{k - 1}|\mathcal{M}_k) \\
&\propto p({\bm{m}_k}|\mathit{\Gamma} _{k - 1},\mathcal{M}_{k - 1})p({\gamma _{k - 1}}|\mathit{\Gamma} _{k - 2},\mathcal{M}_{k - 1})p(\mathit{\Gamma} _{k - 2}|\mathcal{M}_{k - 1}).
\end{align*}

Jointly with \eqref{eq27}, it implies that
\begin{equation*}
\begin{split}
&\omega _{k-1}^i \\
&\propto \frac{{p({\bm{m}_k}|\mathit{\Gamma} _{k - 1}^i,\mathcal{M}_{k - 1})p(\gamma _{k - 1}^i|\mathit{\Gamma} _{k - 2}^i,\mathcal{M}_{k - 1})p(\mathit{\Gamma} _{k - 2}^i|\mathcal{M}_{k - 1})}}{{q({\gamma _k}|\Gamma _{k - 2}^i,\mathcal{M}_k)q(\mathit{\Gamma} _{k - 2}^i|\mathcal{M}_{k - 1})}}\\
&= \omega _{k - 2}^i\frac{{p({\bm{m}_k}|\mathit{\Gamma} _{k - 1}^i,\mathcal{M}_{k - 1})p(\gamma _{k - 1}^i|\mathit{\Gamma} _{k - 2}^i,\mathcal{M}_{k - 1})}}{{q(\gamma _{k - 1}^i|\mathit{\Gamma} _{k - 2}^i,\mathcal{M}_k)}}
\end{split}
\end{equation*}
where $p({\bm{m}_k}|\mathit{\Gamma} _{k - 1}^i,\mathcal{M}_{k - 1})$ is an approximately conditional Gaussian density, e.g.,
\begin{equation}\label{samdensity}
p({\bm{m}_k}|\mathit{\Gamma} _{k - 1}^i,\mathcal{M}_{k - 1}) \approx {\mathcal N}({\bm h}(\widehat{\bm x}_{k|k-1}^{i}),H_k^{i}\Sigma _{k|k - 1}^i(H_k^{i})^T + R).
\end{equation}

The degeneracy problem is essential to the success of particle sampling.  To alleviate it, there are two approaches of selecting the important distribution $q(\gamma _{k - 1}|\mathit{\Gamma} _{k - 2}^i,\mathcal{M}_k)$\cite{Arulampalam2002A}. One is $p(\gamma _{k - 1}|\mathit{\Gamma} _{k - 2}^i,\mathcal{M}_k)$, which minimizes a suitable measure of the degeneracy of the algorithm. The other is  $p(\gamma _{k - 1}|\gamma _{k - 2}^i)$, which makes it easy to draw particles and compute the importance weights. 

We  choose the later one as the importance distribution, e.g., $q(\gamma _{k - 1}|{\mathit{\Gamma}^i _{k - 2}},{\mathcal{M}_k}) = p(\gamma _{k - 1}|\gamma _{k - 2}^i)$  to simplify the process of drawing samples. Specifically, the new particle is  generated via the following distribution 
\begin{align} \label{eqdistribution}
\gamma _{k - 1}^i\sim p(\gamma _{k - 1}|\gamma _{k - 2}^i),
\end{align}
which is explicitly given in (\ref{eqtpmatrix})\footnote{If the transition probability matrix $P$ is unknown, it is suggested to directly sample $\gamma _{k - 1}^i$ via a uniform distribution.}.  Furthermore, the associated weights are updated as
\begin{align*}
\omega _{k - 1}^i \propto \frac{{p(\mathit{\Gamma} _{k - 1}^i|\mathcal{M}_k)}}{{q(\mathit{\Gamma} _{k - 1}^i|\mathcal{M}_k)}} = \omega _{k - 2}^ip({\bm{m}_k}|\mathit{\Gamma} _{k - 1}^i,\mathcal{M}_{k - 1}).
\end{align*}

Finally, the particle filter with a  resampling step is summarized in Algorithm \ref{rbpfalg}. 

\subsection{Jacobian Matrices of Sensor Models} \label{subJ}
The Jacobian matrix in (\ref{jacbimatrix}) is still pending, and depends on the sensor in use.  We show how to explicitly compute them in this subsection. 
\subsubsection{Camera Sensor} The noisy measurement ${\bm{m}^c_k}$  of the camera is given as
	\begin{equation} \label{eqmea}
		{\bm{m}^c_k} = {\bm h}^c({\bm c}_k^{ta}) + {\bm{v}^c_k}
	\end{equation}
	where $\{\bm{v}^c_k\}$ is a white Gaussian noise, i.e.,  ${\bm{v}^c_k}\sim{\mathcal  N}(0,R^c)$. ${\bm c}_k^{ta}$ is the relative position of the GMT to the camera in the camera frame (c.f. Fig.~\ref{figpos})  and ${\bm h}^c({\bm c}_k^{ta})$ returns the coordinates of ${\bm c}_k^{ta}$ on the image plane, which is the projection of the GMT onto the image plane and is defined in  (\ref{eqcam}). 
	
	By (\ref{jcbcam}) and (\ref{jacbimatrix}), the Jacobian matrix of ${\bm h}^c({\bm c}_k^{ta})$ with respect to ${\bm x}_{k}^t$  is given as
	
\begin{equation}\label{jacbmac}
H_k^c:=\frac{{\bm h}^c({\bm c}_k^{ta})}{\partial {\bm x}_{k}^t}=J^c({\bm c}_k^{ta})\frac{\partial{\bm c}_k^{ta}}{\partial {\bm x}_k^t}.
\end{equation}

Next, we show how to compute ${\partial{\bm c}_k^{ta}}/{\partial {\bm x}_k^t}$.  When using a camera to take measurements, we adopt a gimbal platform to keep the GMT in the FOV of the camera. The platform is composed by a yaw gimbal and a pitch gimbal. The pitch gimbal can only rotate around the pitch axis with a pitch angle ${\theta ^c}$, and the yaw gimbal can only rotate around the yaw axis with a yaw angle ${\psi ^c}$. Consequently, the transformation matrix from the body frame of the UAV to the camera frame is computed as
\begin{equation*}
	\begin{split}
		{C^{ca}_k} 
		= \bmatri
				{\cos {\theta ^c_k}}&0&{\sin {\theta ^c_k}}     \\
				0&1&0\\
				{ - \sin {\theta ^c_k}}&0&{\cos {\theta ^c_k}}
			\ematri\bmatri
			{\cos {\psi ^c_k}}&{\sin {\psi ^c_k}}&0\\
			{ - \sin {\psi ^c_k}}&{\cos {\psi ^c_k}}&0\\
			0&0&1
		\ematri
	\end{split}
\end{equation*}
where ${\theta ^c_k} \in [0,\pi /2]$ is the pitch angle, and ${\psi ^c_k} \in [-\pi ,\pi)$ is the yaw angle.

Since the UAV is flying with a constant altitude, the transformation matrix from the inertial frame to the body frame is only relative to the heading angle of the UAV. Thus, it is given as
\begin{equation}
	C^{ai}_k = \bmatri
			{\cos \psi^a _k}&{\sin \psi^a _k}&0\\
			{ - \sin \psi^a _k}&{\cos \psi^a _k}&0\\
			0&0&1
		\ematri
	\end{equation}
	where $\psi^a_k  \in [ - \pi ,\pi)$ is the heading angle of the UAV.
	
	Overall, the transformation matrix from the inertial frame to the camera frame is
	\begin{equation}
		C^{ci}_k = C^{ca}_k C^{ai}_k.
	\end{equation}

	Consider Fig.~\ref{figpos}, where $\{I\}$ is the inertial frame, and in the inertial frame  denote the positions of the UAV\footnote{The positions of the UAV, the camera and the radar are assumed to be of the same.} and the GMT respectively by ${\bm i}^a_k=[x^a_k,y^a_k,z^a_k]^T $ and ${{\bm i}^t_k} =[x^t_k,y^t_k,z^t_k]^T$,  which in implementation is replaced by its one-step prediction as in \eqref{jacbimatrix}. Then, ${\bm i}_k^{ta}:={\bm i}^t_k-{\bm i}^a_k$ is the relative position of the GMT to the camera in the inertial frame. 
	
	By the definition of ${\bm c}_k^{ta}$, it follows that ${\bm i}_k^{ta}=C_k^{ic}{\bm c}_k^{ta}$. Since $C_k^{ic}=(C_k^{ci})^{-1}$, it implies that ${\bm c}_k^{ta}=C_k^{ci}{\bm i}_k^{ta}$ and
	$$
	\frac{\partial{\bm c}_k^{ta}}{\partial {\bm x}_k^t}=[C_k^{ci}~~0_{3\times 2}], 
	$$
	where $0_{3\times 2}$ denotes a zero matrix with a compatible dimension. Jointly with (\ref{jcbcam}) and (\ref{jacbmac}), the Jacobian matrix of the camera model can be computed. Note that ${\bm i}^a_k$  is directly obtained by using the POS in the UAV. 

\subsubsection{Radar Sensor}
		The noisy measurement  of a radar is given by
		\begin{equation} \label{eqmeara}
			{\bm{m}^{r}_k} = h^{r}({\bm i}_k^{ta}) + {\bm{v}^{r}_k}
		\end{equation}
		where $\{\bm{v}^{r}_k\}$ is a white Gaussian noise, i.e., ${\bm{v}^{r}_k}\sim{\mathcal  N}(0,R^{r})$, $h^{r}({\bm{x}_k})$ is defined in \eqref{eqrada} and ${\bm i}_k^{ta}$ is the relative position of the GMT to the radar in the inertial frame. 

Then, the Jacobian matrix of $h^{r}({\bm i}_k^{ta})$ with respect to ${\bm x}_{k}^t$  is easily given as
	
\begin{equation}\label{jacbmac1}
H_k^r:=J^r({\bm i}_k^{ta})\frac{\partial {\bm i}_k^{ta}}{\partial {\bm x}_k^t}=[J^r({\bm i}_k^{ta})~~0_{2\times 2}]
\end{equation}
where $J^r(\cdot)$ is defined in (\ref{jcbradar})  and ${\bm i}_k^{ta}$ is computed as in the case of camera sensor.

\begin{algorithm} [!t]
	\caption{RBPF for the GMT with unknown maneuver}
    \label{rbpfalg}
	\begin{enumerate}\renewcommand{\labelenumi}{\rm(\alph{enumi})}
		\item \textbf{Initialization:} For  $i = 1,\ldots,n$, draw $n$ particles $\gamma _{0}^i$ from the prior $p(\gamma _{0})$ and let $\widehat{\bm x}_{0|0}^{i} = {\bm{x}_0}$, ${\Sigma} _{0|0}^{i} = {{\Sigma} _0}$, $\omega _{0}^i = 1 / n$.
		\item \textbf{Updates:} For $k\ge1$ and $i=1,\ldots,n$, the UAV does the following updates:
		\begin{itemize}
		\item \textbf{Time update:} The state prediction and its covariance matrix are updated by
			           \begin{align*}
			          \widehat {\bm x}_{k|k - 1}^i &= {F_{k-1}}\widehat {\bm x}_{k - 1|k - 1}^i + B_{k-1}\bm{u}^t({\gamma _{k - 1}^i}),\\
			           {\Sigma} _{k|k - 1}^i &= {F_{k-1}}{\Sigma} _{k - 1|k - 1}^i{F_{k-1}^T} + G_{k-1}QG_{k-1}^T .
			           \end{align*}
			\item \textbf{Measurement update:} The UAV receives a measurement $\bm{m}_k$ and does the following updates
			\begin{align*}
			& \widehat {\bm x}_{k|k}^{i}=\widehat{\bm x}_{k|k-1}^{i} + K_k^i(\bm{m}_k - {\bm h}(\widehat{\bm x}_{k|k-1}^{i}))   \\
			& {\Sigma} _{k|k}^{i} = {\Sigma} _{k|k - 1}^i - K_k^iH_k^{i}{\Sigma} _{k|k - 1}^i
			\end{align*}
			where $K_k^i$ and $H_k^{i}$ are given in (\ref{measuupdate}).
		\end{itemize}
		\item  \textbf{Importance weight update, and resampling:} If $k\ge 2$, then
		\begin{itemize}
		\item The normalized importance weights are updated as
			\begin{align*}
			\omega _{k - 1}^i  \propto  \omega _{k - 2}^ip({\bm{m}_k}|\mathit{\Gamma} _{k - 1}^i,\mathcal{M}_{k - 1})
			\end{align*} 
				where $p({\bm{m}_k}|\mathit{\Gamma} _{k - 1}^i,\mathcal{M}_{k - 1})$ is given in (\ref{samdensity}).
		\item  Compute an estimate of the effective number of particles  
			\begin{align*}
			n_{\rm{eff}} = \frac{1}{{\sum\nolimits_{i = 1}^n {{{(\omega _{k-1}^i)}^2}} }} .
			\end{align*}
			\item Given a resampling threshold $n^{\rm{c}}>0$. If  $n^{\rm{eff}} < n^{\rm{c}}$, then perform resampling.  Take $n$ new samples $\gamma _{k - 1}^{i*}$ with replacement from the set $\{ \gamma _{k - 1}^i\}_{i=1}^n $ according to the probability distribution that  
			\begin{align*}
			\Pr\{\gamma _{k - 1}^{i*} = \gamma _{k - 1}^i\} = \omega _{k-1}^i.
			\end{align*}
		    \item  Let $(\gamma _{k - 1}^i,\widehat{\bm x}_{k|k}^i,{\Sigma} _{k|k}^i) = (\gamma _{k - 1}^{i*},\widehat{\bm x}_{k|k}^{i*}, {\Sigma} _{k|k}^{i*})$ and $\omega _{k-1}^i=1/n$. 
			\end{itemize}
		\item  \textbf{Output:} The state estimate of the GMT and its covariance matrix are given by
			 \begin{align*}
			 \widehat{\bm x}_{k|k}^{t} &=  \sum\limits_{i = 1}^n {\omega _{k-1}^i} \widehat {\bm x}_{k|k}^{i} \\
			 \Sigma_{k|k}^{t} &=  \sum\limits_{i = 1}^n {\omega _{k-1}^i} \Sigma_{k|k}^{i}.
			 \end{align*}
		\item \textbf{Sampling:} Let $k \leftarrow k+1$ and for $i=1, \ldots, n$, draw $\gamma_{k-1}^{i}\sim p(\gamma _{k - 1}|\gamma _{k - 2}^i)$.
		\end{enumerate}		
\end{algorithm}

\subsection{Certainty Equivalence for the Flight Controller}
To obtain the flight controller for the UAV, we adopt the principle of certainty equivalence \cite{bertsekas1995dynamic} by directly replacing the exact states of the GMT $[x^t_k, y^t_k, v_k^{tx}, v_k^{ty}]^T$ with their estimates $[\widehat x^t_k, \widehat y^t_k, \widehat v_k^{tx},\widehat v_k^{ty}]^T$, which is obtained by running Algorithm \ref{rbpfalg}. Specifically, let $[\widehat{v}_k^{d}, \widehat{\psi}_k^{d}]^T$ be the estimated desired velocity and heading angle of the UAV, which are obtained by using the state estimates of the GMT in (\ref{equmov}).  Then, we obtain that
 \begin{align*}
v^d_k &= \widehat{v}_k^{d} + \widetilde v_k^{d}, \\
{\psi ^d_k} &= \widehat{\psi}_k^{d} + \widetilde \psi_k^{d},
\end{align*}
where $ \widetilde v_k^{d}$ and $ \widetilde \psi_k^{d}$ denote  the estimation error of the desired velocity  $v_k^d$  and heading angle $\psi_k^d$ of the UAV. Usually, it is impossible to compute the exact error bounds of a nonlinear filter, which renders that we are unable to rigorously prove the stability of overall systems with the estimated motion state.  However, the simulation results  indicate the proposed controller in this paper indeed is able to complete the loitering task. 

\section{Simulation}\label{sec_simulation}
In this section, we perform simulations to illustrate the effectiveness of the proposed flight controller. 

\subsection{Simulation Setup}
The GMT starts at $[0,100,0]^T$\si{m} with an  initial speed $8$\si{m/s}  and an initial heading angle $\pi /4$\si{rad}. The initial control input is set to be zero, i.e., $\gamma _0 = 1$. Moreover, the three switching commands are given as $\bm{u}^t(1) = {[0,0]^T}$\si{m/s^2}, $\bm{u}^t(2) = {[ - 1,1]^T}$\si{m/s^2}, and  $\bm{u}^t(3) = {[1, - 1]^T}$\si{m/s^2}. The transition probability matrix of the three modes is given in \eqref{eqtpmatrix}. 
The process noise $\bm{w}^t$ of the GMT is a white Gaussian noise with covariance $Q^t = \rm{diag}(0.3^2,0.3^2,0.1^2)$. 

The UAV starts at the position $[- 300, 100, 50]^T$\si{m}, and keeps its altitude invariant. The initial speed and initial heading angle are $10 $\si{m/s} and $ - \pi /2$\si{rad}, respectively. Moreover, the turning rate is restricted to the interval  $[- 0.2,0.2] $\si{rad/s}.  The random disturbances $\{\bm{w}_k^{a}\}$ to the UAV are given as $w_k^{av}\sim \mathcal{N}(0,0.1^2)$ and $w_k^{a\psi}\sim \mathcal{N}(0,0.02^2)$.

\subsection{Comparisons of Control Methods under Exact States of the GMT}

\begin{figure}[!t]
	\centerline{\includegraphics[width=0.8\linewidth]{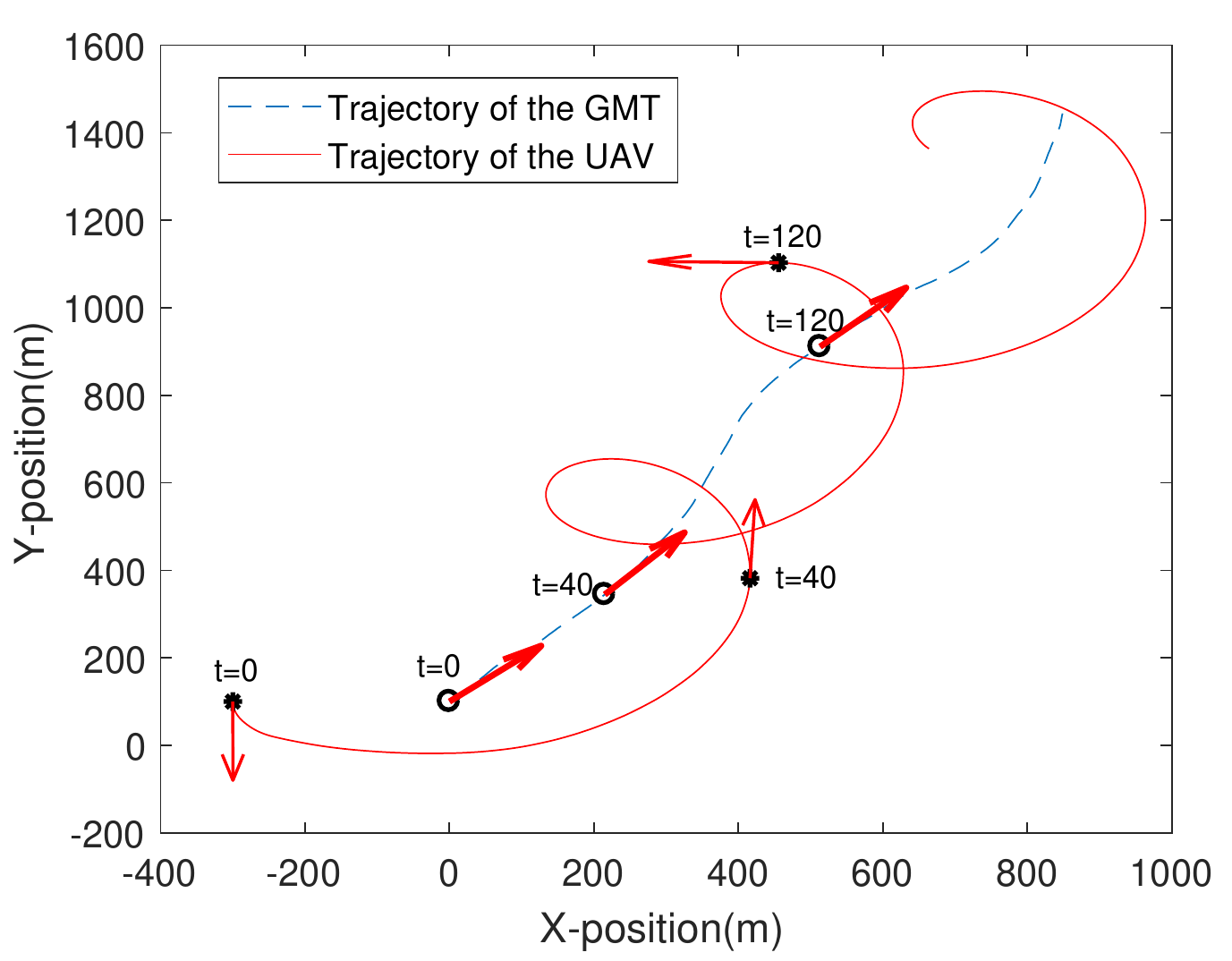}}
	\caption{Trajectories of the GMT and the UAV in the X-Y plane.}
	\label{figpath3}
\end{figure}

\begin{figure}[!t]
	\centering		
	\includegraphics[width=0.8\linewidth]{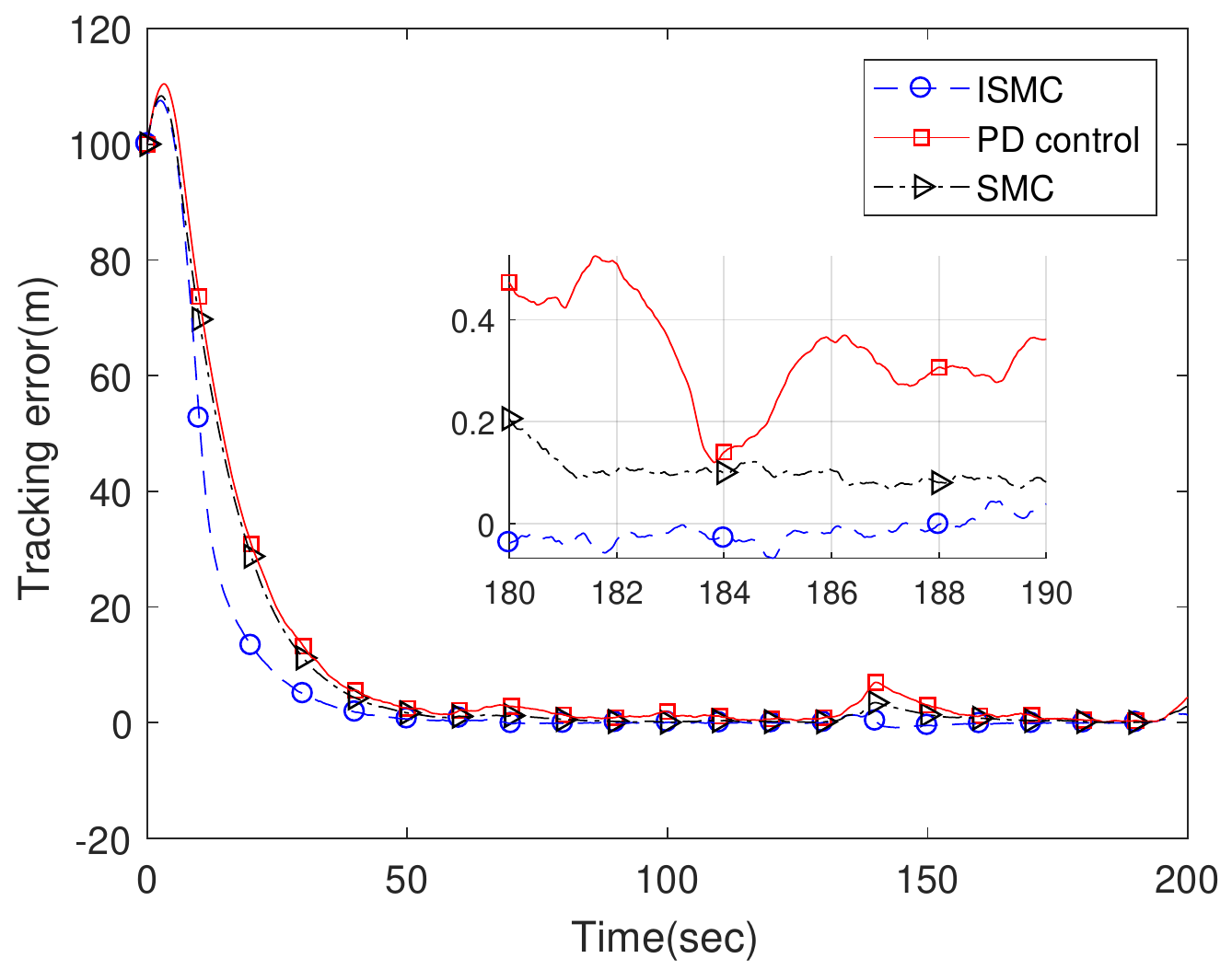}	
	\caption{Tracking errors under different control methods.}
	\label{fig1}
\end{figure}

In this subsection, we assume that the state of the GMT is known to the UAV. Since the  guidance vectors in \eqref{eqlya} can only be given {\em online}, i.e., guidance vectors after time $k$ are unavailable to the design of the $k$-th time input of the UAV,  more advanced controllers do not always work, e.g., the model predictive control cannot be applied here as it relies on {\em future}  guidance vectors \cite{Camacho2004Model}. Thus, we only compare the proposed ISMC with the PID control \cite{Dobrokhodov2008Vision,Frew2008Coordinated,Zhang2010Vision}, and the standard SMC \cite{Perruquetti2002Sliding} for the UAV with dynamics in (\ref{equ4}). The trajectories of the UAV and GMT are shown in Fig.~\ref{figpath3},  where circles and stars represent their positions at different time instants, and  arrows denote course directions. The desired distance from the UAV to the GMT is set to $200$\si{m}. 

The comparison is depicted in Fig.~\ref{fig1}, which shows that the proposed ISMC has the shortest setting time with zero steady-state error. We shall further test its effectiveness by using the estimated states via the RBPF. 

\subsection{Motion Estimation} \label{seccam}
In this section, we only test the estimation performance of the proposed RBPF. If the transition probability matrix $P$ in \eqref{eqtpmatrix} for the maneuver process is unknown, each element of $P$ is set to $1/3$. Moreover, we adopt the Monte Carlo method by independently repeating  $1000$ experiments  to compute the root-mean-square error (RMSE) of the state estimate of the GMT, i.e., 
\bee\label{rmsesim}
\text{RMSE}_k:=\left(\bE[(\widehat{x}_k^t-{x}_k^t)^2+(\widehat{y}_k^t-{y}_k^t)^2]\right)^{1/2},
\ene
where $[\widehat{x}_k^t,\widehat{y}_k^t]^T$ is produced by the RBPF.  All simulations are of the same start position with $100$ particles in the RBPF. 

For comparison, an EKF is also designed for the motion estimation.  Note that the EKF requires the input to the GMT, which is unfortunately unknown in our setting. To solve it, we observe that the stationary distribution of the Markov chain under the transition probability matrix $P$ in \eqref{eqtpmatrix} is a uniform distribution. Thus, we  randomly sample a control input from the set $\{\bm{u}^t(1),\bm{u}^t(2),\bm{u}^t(3)\}$ with equal probability for the EKF.

\subsubsection{Camera Sensor}

\begin{table}[t!] \label{tabsliding}%
    \caption{Parameters of the ISMC}
    \centering	
	\begin{tabular}{|c|c|c|c|}	
		\hline
		\diagbox{Superscript}{Parameter}  &~~$W$~~ &~~$M$~~ &~~$C$~~\\
		\hline
		v                                                       & 0.2         & 5.0          & 5.0  \\
		$\psi$                                                 & 0.04      & 0.6       & 3.0  \\
		\hline
	\end{tabular}%
\end{table}%

The sampling frequency of the camera sensor is $25$\si{Hz} and
the measurement noise is Gaussian white noise with covariance $R^c = \rm{diag} (0.03^2,0.03^2)$. Note that these parameters in real experiments of \cite{Dobrokhodov2008Vision} are set as $30$\si{Hz} and $\rm{diag} (0.02^2,0.02^2)$, respectively. Fig.~\ref{fig4}  illustrates the RMSE of the RBPF and EKF. One can observe that the performance of the RBPF is better than that of the EKF and is not significantly degraded even if the transition probability matrix $P$ is unknown, and both cases return favorable estimation performance. 

%

\begin{figure}[!t]
	\centering		
	\includegraphics[width=0.8\linewidth]{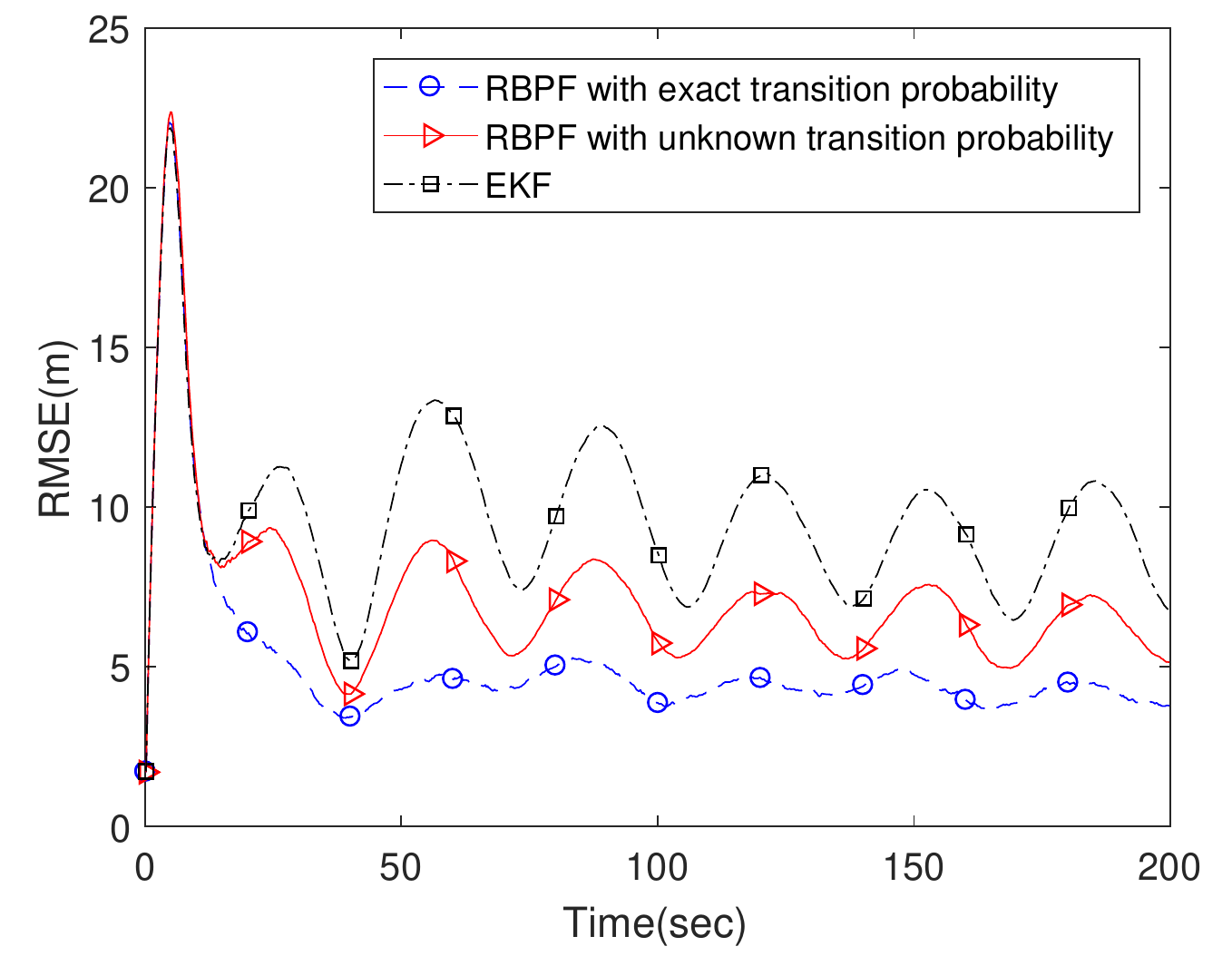}	
	\caption{RMSE of the RBPF and EKF with a camera sensor.}
	\label{fig4}
\end{figure}

\subsubsection{Radar Sensor}
For a radar sensor, we follow the same setting in \cite{Averbuch1991Radar} where the sampling frequency is $10$\si{Hz} and the covariance is $R^r =\rm{diag}(2.0^2,0.01^2)$. From Fig. \ref{figradardis1}, the same conclusion can be made as in the case of the camera sensor. 

\begin{figure}[!t]
	\centering		
	\includegraphics[width=0.8\linewidth]{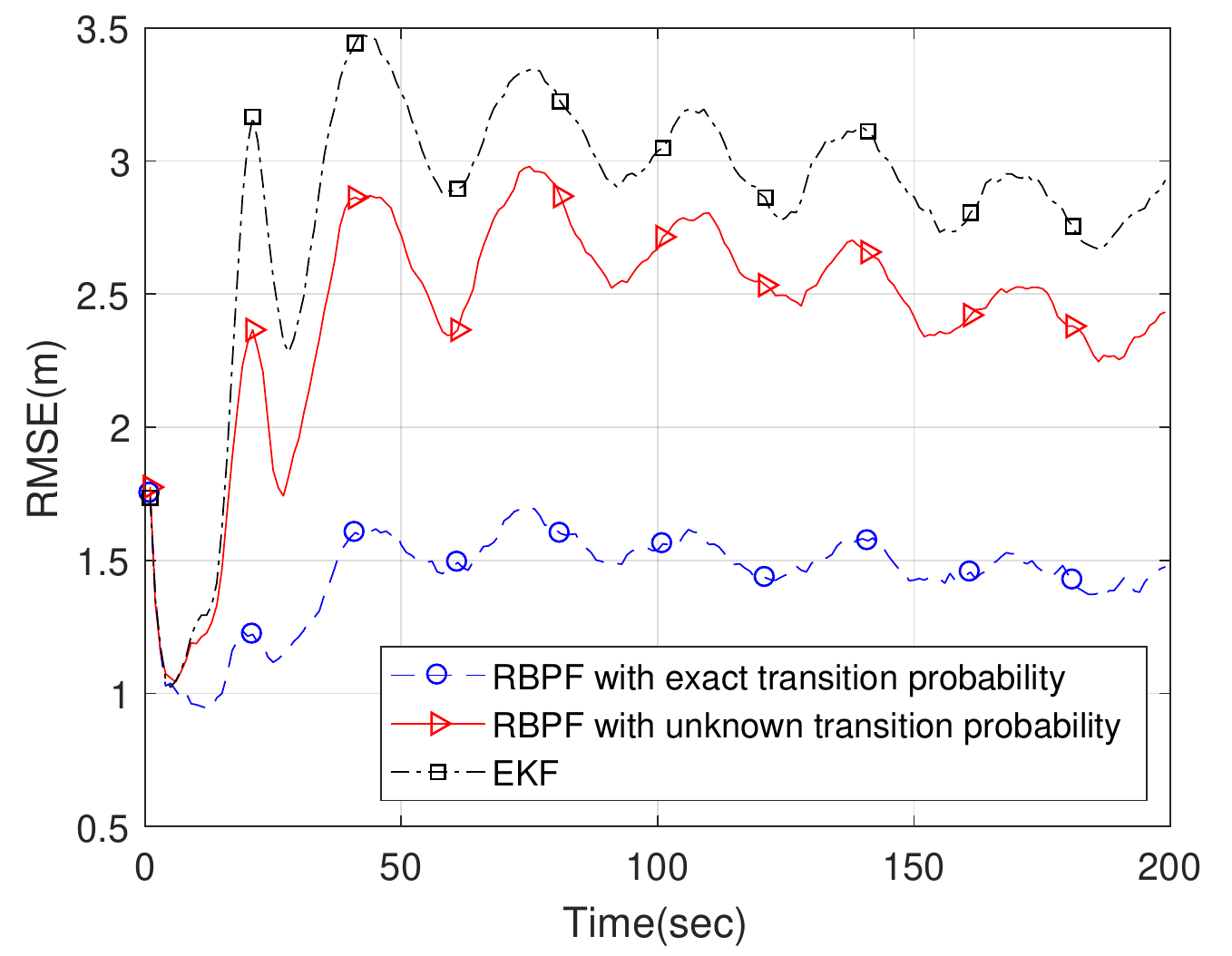}	
	\caption{RMSE of the RBPF and EKF with a radar sensor.}
	\label{figradardis1}
\end{figure}
Thus, both cases consistently verify the effectiveness of the RBPF.

\subsection{Loitering Performance with Estimated States of the GMT}

\begin{figure}[!t]
	\centering		
	\includegraphics[width=0.8\linewidth]{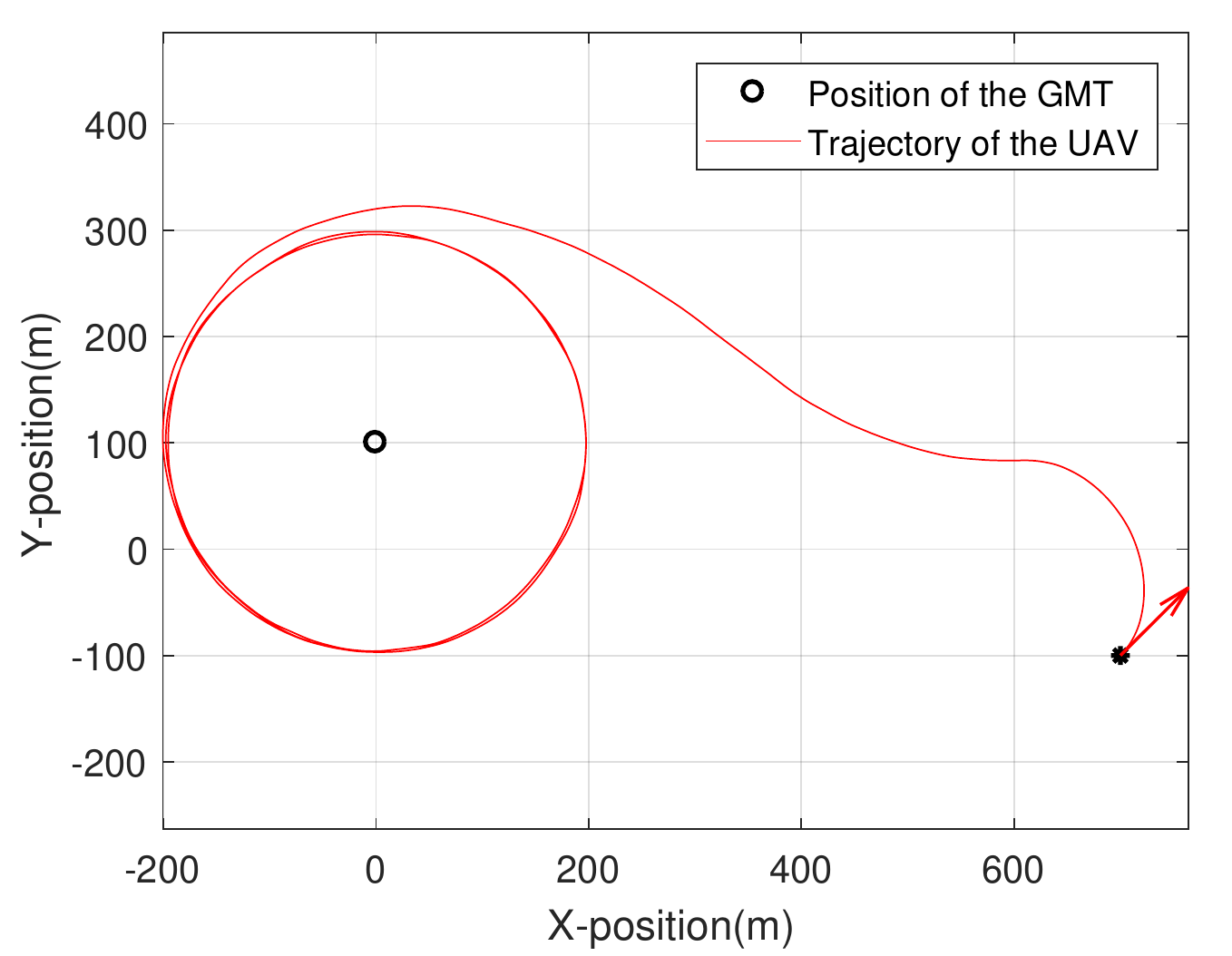}	
	\caption{Trajectory of the UAV when the GMT is stationary.}
	\label{fig5}
\end{figure}
\begin{figure}[!t]
	\centering		
	\includegraphics[width=0.8\linewidth]{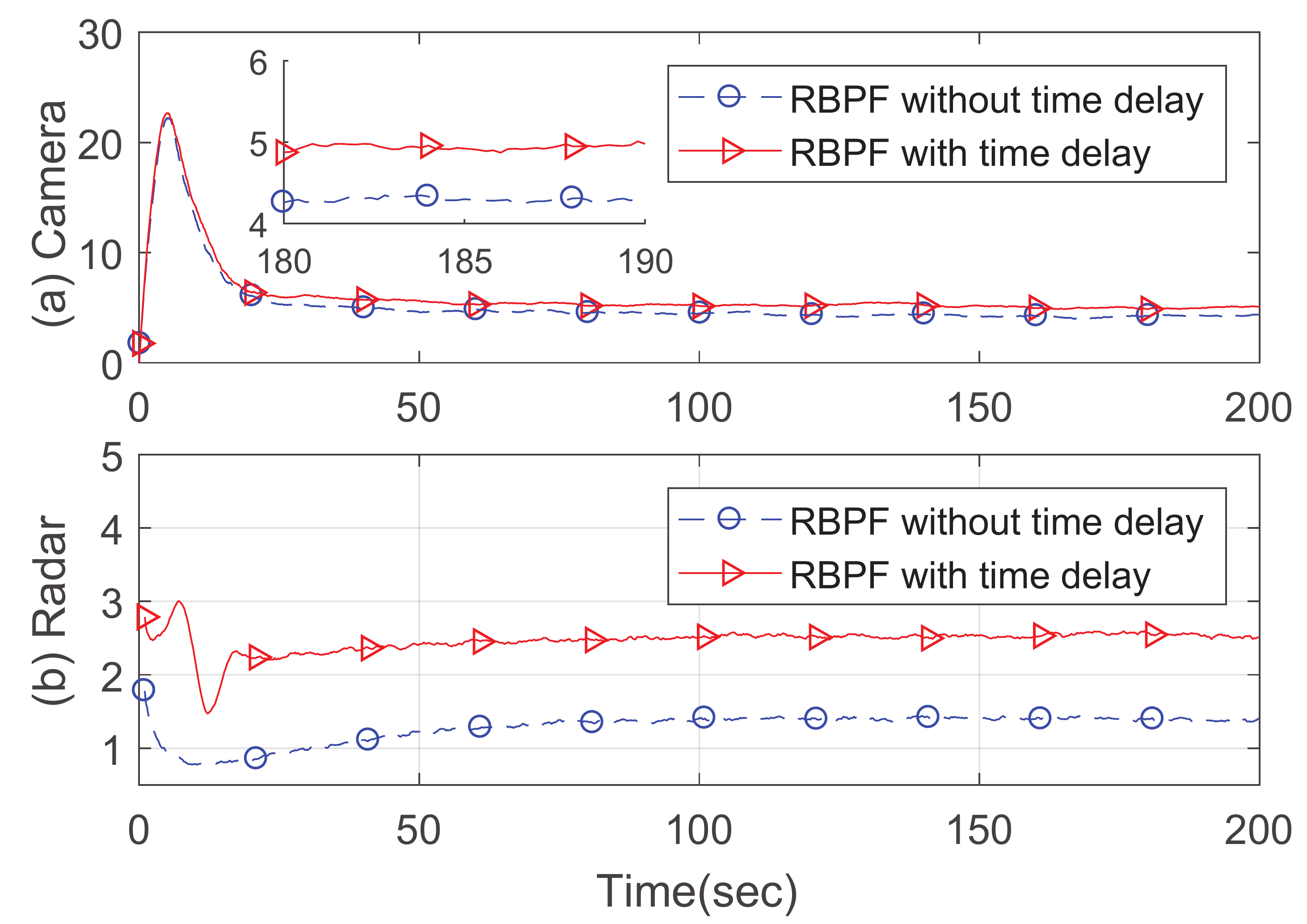}	
	\caption{RMSE of the RBPF with: (a) a camera sensor; (b) a radar sensor.}
	\label{fig3}
\end{figure}
In this subsection, we test the loitering performance of the UAV with estimated states of the GMT in \eqref{gmtdyn}. The sampling frequencies are of the same as that in Section \ref{seccam} with the consideration of  $0.1$\si{s} time delay in sensor measurements. Note that the number of particles is still $100$.  

Firstly,  we consider the dynamics of the UAV in \eqref{equ4}. When the GMT is stationary, Fig.~\ref{fig5} depicts the trajectory of the UAV under the proposed controller. Clearly, the UAV finally circumnavigates the GMT with a desired radius $200$\si{m}. 

Then, $\{\gamma_k\}$ is modeled as a nine state Markov chain, which corresponds to nine modes of maneuver: $\bm u^t(1)=[0,0]^T$, $\bm u^t(2)=[1,0]^T$, $\bm u^t(3)=[1,1]^T$, $\bm u^t(4)=[0,1]^T$, $\bm u^t(5)=[-1,1]^T$, $\bm u^t(6)=[-1,0]^T$, $\bm u^t(7)=[-1,-1]^T$, $\bm u^t(8)=[0,-1]^T$, $\bm u^t(9)=[1,-1]^T$. Moreover, all the diagonal elements of the transition probability matrix $P$ are $0.6$ and other elements are set to $0.05$.  For the implementation of the RBPF, $P$ is assumed to be unknown and each element is simply set as $1/9$. Fig.~\ref{fig3} shows the RMSE defined in (\ref{rmsesim}) over $1000$ times under different sensors. 

\begin{figure}[!t]
	\centering		
	\includegraphics[width=0.8\linewidth]{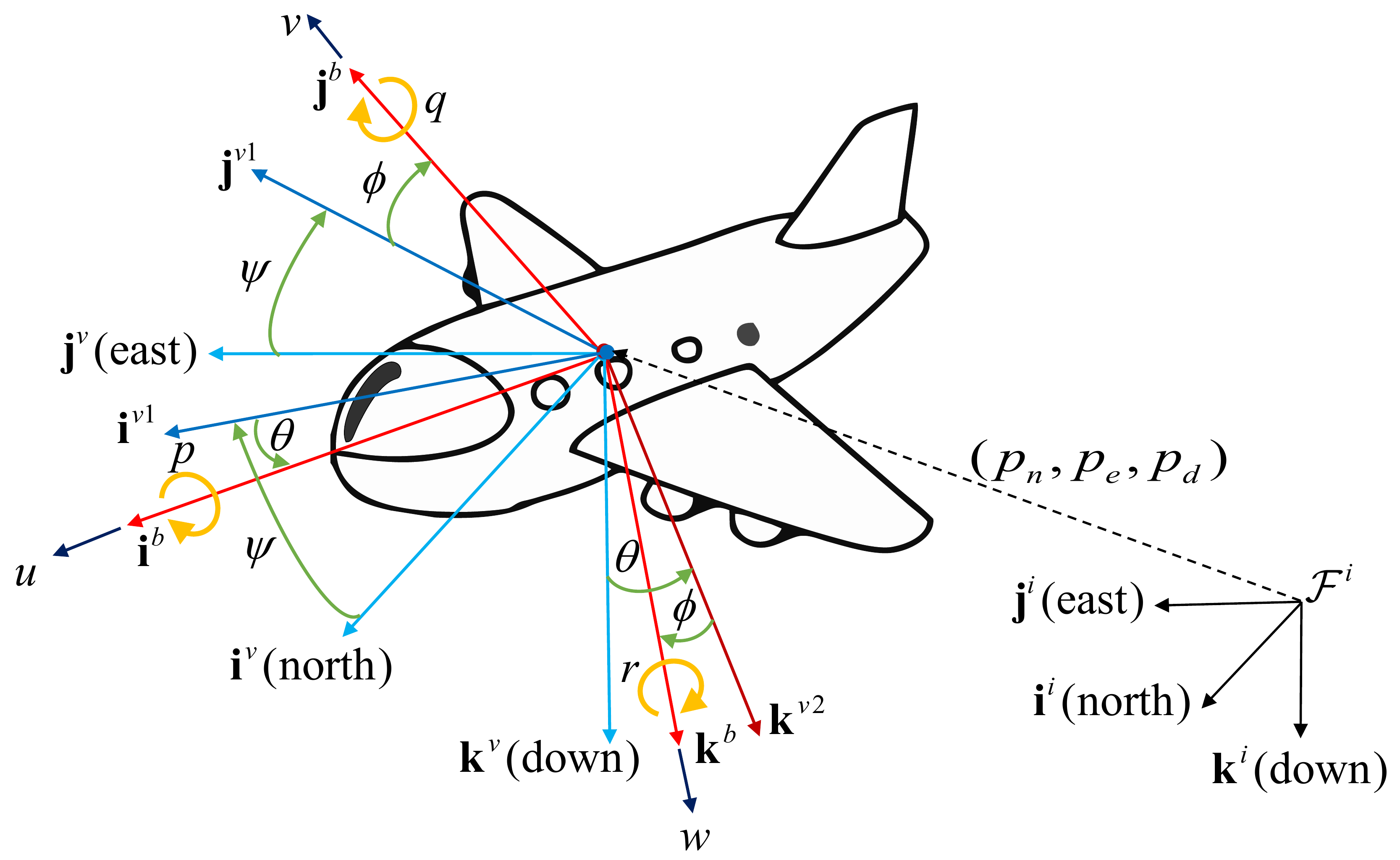}	
	\caption{6-DOF fixed-wing UAV and its associated coordinate frames \cite{Beard2012Small}.}
	\label{fig6}
\end{figure}

Then, a 6-DOF fixed-wing UAV \cite{Beard2012Small} is adopted to test the effectiveness of the proposed flight controller,  see Fig.~\ref{fig6} where $[p_n,p_e,p_d]^T$ and $[\phi,\theta,\psi]^T$ are the position
and orientation of the UAV in the inertial coordinate frame, respectively. 
$[u,v,w]^T$ and $[p,q,r]^T$ are linear velocities and angular rates in the body frame. Due to page limitation, we omit details of the mathematical
model of the UAV, which can be found in \cite{Beard2012Small}, and adopt codes from \cite{small} for the model. To complete the loitering task with a camera sensor, we design the controller by the proposed method of this work. The 3D trajectories of the GMT and the UAV are shown in Fig.~\ref{fig7}, where the altitude is controlled to $100$\si{m} under the altitude controller in \cite{small}. Moreover, the trajectories on X-Y plane, the trajectory in Z direction, the estimation error, and tracking error of one run are all depicted in Fig.~\ref{fig8}. Simulation results in this case also indicate that the proposed flight controller is effective. 

\begin{figure}[!t]
	\centering		
	\includegraphics[width=0.8\linewidth]{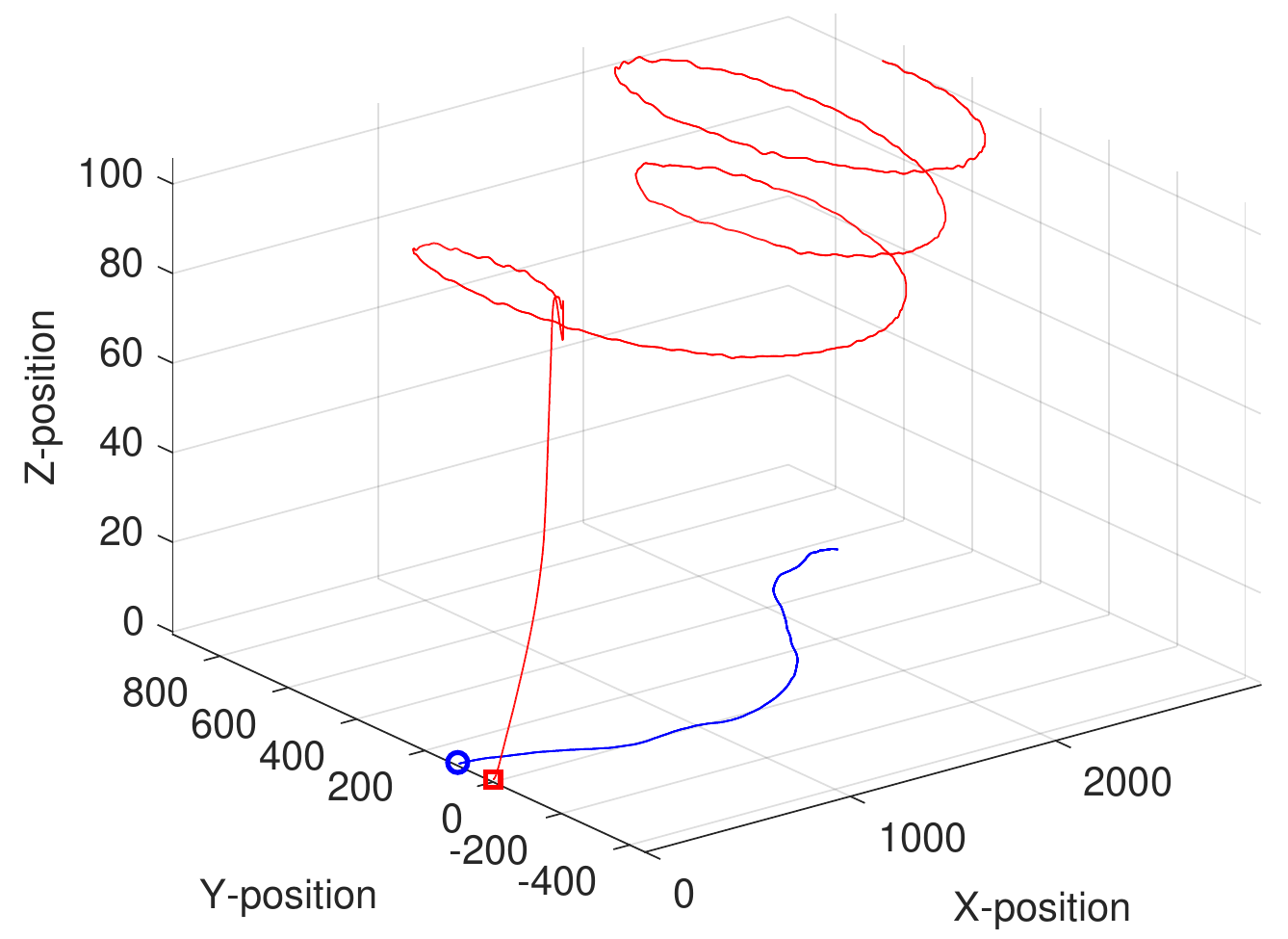}	
	\caption{3D trajectories of the GMT and the UAV.}
	\label{fig7}
\end{figure}

\begin{figure}[!t]
	\centering		
	\includegraphics[width=0.7\linewidth]{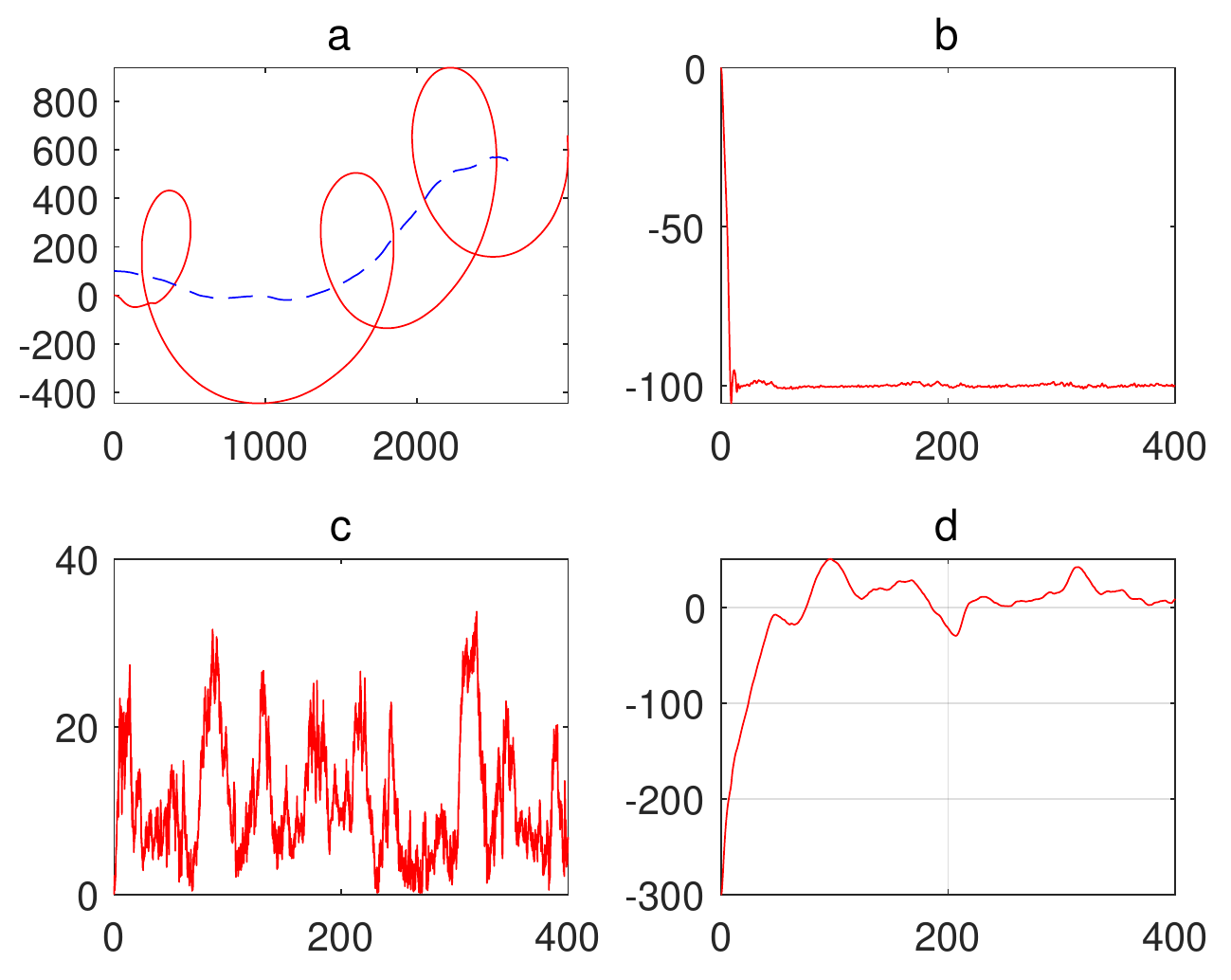}	
	\caption{Simulation results: (a) trajectories of the GMT and the UAV projected in XY plane; (b) the altitude trajectory of the UAV; (c) the absolute estimation error of the GMT in one run; (d) the tracking error  in one run.}
	\label{fig8}
\end{figure}

\section{Conclusion}
\label{sec_conclusion}
In this paper, we proposed a discrete-time ISMC for the UAV to loiter over a  GMT with three possible maneuvering states. To achieve it, a discrete-time guidance vector field was designed by assuming that the motion state of the GMT is known. Then, we designed a RBPF to simultaneously estimate the maneuver and the motion state of the GMT by using the measurements of a vision camera or a radar. Simulations finally validated our theoretical results. 
\bibliographystyle{IEEEtran}
\bibliography{bib/mybib}

\begin{thebibliography}{10}
\providecommand{\url}[1]{#1}
\csname url@samestyle\endcsname
\providecommand{\newblock}{\relax}
\providecommand{\bibinfo}[2]{#2}
\providecommand{\BIBentrySTDinterwordspacing}{\spaceskip=0pt\relax}
\providecommand{\BIBentryALTinterwordstretchfactor}{4}
\providecommand{\BIBentryALTinterwordspacing}{\spaceskip=\fontdimen2\font plus
\BIBentryALTinterwordstretchfactor\fontdimen3\font minus
  \fontdimen4\font\relax}
\providecommand{\BIBforeignlanguage}[2]{{%
\expandafter\ifx\csname l@#1\endcsname\relax
\typeout{** WARNING: IEEEtran.bst: No hyphenation pattern has been}%
\typeout{** loaded for the language `#1'. Using the pattern for}%
\typeout{** the default language instead.}%
\else
\language=\csname l@#1\endcsname
\fi
#2}}
\providecommand{\BIBdecl}{\relax}
\BIBdecl

\bibitem{Zhang2010Vision}
M.~Zhang and H.~H.~T. Liu, ``Vision-based tracking and estimation of ground
  moving target using unmanned aerial vehicle,'' in \emph{American Control
  Conference}, 2010, pp. 6968--6973.

\bibitem{Ding2010Multi}
X.~C. Ding, A.~R. Rahmani, and M.~Egerstedt, ``Multi-{UAV} convoy protection:
  {An} optimal approach to path planning and coordination,'' \emph{IEEE
  Transactions on Robotics}, vol.~26, no.~2, pp. 256--268, 2010.

\bibitem{Oliveira2016Moving}
T.~Oliveira, A.~P. Aguiar, and P.~Encarnação, ``Moving path following for
  unmanned aerial vehicles with applications to single and multiple target
  tracking problems,'' \emph{IEEE Transactions on Robotics}, vol.~32, no.~5,
  pp. 1062--1078, 2016.

\bibitem{Meng2017Decentralized}
W.~Meng, Z.~He, R.~Su, P.~K. Yadav, R.~Teo, and L.~Xie, ``Decentralized
  multi-{UAV} flight autonomy for moving convoys search and track,'' \emph{IEEE
  Transactions on Control Systems Technology}, vol.~25, no.~4, pp. 1480--1487,
  2017.

\bibitem{Dobrokhodov2008Vision}
V.~N. Dobrokhodov, I.~I. Kaminer, K.~D. Jones, and R.~Ghabcheloo,
  ``Vision-based tracking and motion estimation for moving targets using
  unmanned air vehicles,'' \emph{Journal of Guidance Control \& Dynamics},
  vol.~31, no.~4, pp. 907--917, 2008.

\bibitem{Frew2007Lyapunov}
E.~W. Frew, D.~A. Lawrence, C.~Dixon, J.~Elston, and W.~J. Pisano, ``Lyapunov
  guidance vector fields for unmanned aircraft applications,'' in
  \emph{American Control Conference}, 2007, pp. 371--376.

\bibitem{Li2011Vision}
Z.~Li, N.~Hovakimyan, V.~Dobrokhodov, and I.~Kaminer, ``Vision-based target
  tracking and motion estimation using a small {UAV},'' in \emph{IEEE
  Conference on Decision and Control}, 2011, pp. 2505--2510.

\bibitem{Summers2009Coordinated}
T.~Summers, M.~Akella, and M.~Mears, ``Coordinated standoff tracking of moving
  targets: {Control} laws and information architectures,'' \emph{Journal of
  Guidance Control \& Dynamics}, vol.~32, no.~1, pp. 56--69, 2009.

\bibitem{Pan2018Efficient}
Y.~Pan, X.~Li, and H.~Yu, ``Efficient {PID} tracking control of robotic
  manipulators driven by compliant actuators,'' \emph{IEEE Transactions on
  Control Systems Technology}, vol.~PP, no.~99, pp. 1--8, 2018.

\bibitem{Kapitanyuk2017A}
Y.~A. Kapitanyuk, A.~V. Proskurnikov, and C.~Ming, ``A guiding vector-field
  algorithm for path-following control of nonholonomic mobile robots,''
  \emph{IEEE Transactions on Control Systems Technology}, vol.~PP, no.~99, pp.
  1--14, 2017.

\bibitem{Loria2016Leader}
A.~Loria, J.~Dasdemir, and N.~A. Jarquin, ``Leader-follower formation and
  tracking control of mobile robots along straight paths,'' \emph{IEEE
  Transactions on Control Systems Technology}, vol.~24, no.~2, pp. 727--732,
  2016.

\bibitem{Frew2008Coordinated}
E.~W. Frew, D.~A. Lawrence, and M.~Steve, ``Coordinated standoff tracking of
  moving targets using {Lyapunov} guidance vector fields,'' \emph{Journal of
  Guidance Control \& Dynamics}, vol.~31, no.~2, pp. 290--306, 2008.

\bibitem{Xiao2017Target}
X.~Yu and L.~Liu, ``Target enclosing and trajectory tracking for a mobile robot
  with input disturbances,'' \emph{IEEE Control Systems Letters}, vol.~1,
  no.~2, pp. 221--226, 2017.

\bibitem{Zhang2013Tracking}
M.~Zhang and H.~H. Liu, ``Tracking a moving target by a fixed-wing {UAV} based
  on sliding mode control,'' in \emph{AIAA Guidance, Navigation, and Control},
  2013.

\bibitem{Chwa2004Sliding}
D.~Chwa, ``Sliding-mode tracking control of nonholonomic wheeled mobile robots
  in polar coordinates,'' \emph{IEEE Transactions on Control Systems
  Technology}, vol.~12, no.~4, pp. 637--644, 2004.

\bibitem{Shah2015Guidance}
M.~Z. Shah, R.~Samar, and A.~I. Bhatti, ``Guidance of air vehicles: {A} sliding
  mode approach,'' \emph{IEEE Transactions on Control Systems Technology},
  vol.~23, no.~1, pp. 231--244, 2015.

\bibitem{Sarpturk1987On}
S.~Sarpturk, Y.~Istefanopulos, and O.~Kaynak, ``On the stability of
  discrete-time sliding mode control systems,'' \emph{IEEE Transactions on
  Automatic Control}, vol.~32, no.~10, pp. 930--932, 1987.

\bibitem{Saaj2002A}
M.~C. Saaj, B.~Bandyopadhyay, and H.~Unbehauen, ``A new algorithm for
  discrete-time sliding-mode control using fast output sampling feedback,''
  \emph{IEEE Transactions on Industrial Electronics}, vol.~49, no.~3, pp.
  518--523, 2002.

\bibitem{Camacho2004Model}
E.~F. Camacho and C.~Bordons, \emph{Model predictive control}.\hskip 1em plus
  0.5em minus 0.4em\relax Springer London, 2004.

\bibitem{Ghommam2016Quadrotor}
J.~Ghommam, N.~Fethalla, and M.~Saad, ``Quadrotor circumnavigation of an
  unknown moving target using camera vision-based measurements,'' \emph{Iet
  Control Theory \& Applications}, vol.~10, no.~15, pp. 1874--1887, 2016.

\bibitem{Li2004Survey}
X.~R. Li and V.~P. Jilkov, ``Survey of maneuvering target tracking. {Part I}.
  {Dynamic} models,'' \emph{IEEE Transactions on Aerospace \& Electronic
  Systems}, vol.~39, no.~4, pp. 1333--1364, 2004.

\bibitem{zhang2018bayesian}
J.~Zhang, K.~You, and L.~Xie, ``Bayesian filtering with unknown sensor
  measurement losses,'' \emph{IEEE Transactions on Control of Network Systems},
  vol.~6, no.~1, pp. 163--175, 2019.

\bibitem{doucet2001SMCintroduction}
A.~Doucet, N.~De~Freitas, and N.~Gordon, ``An introduction to sequential
  {M}onte {C}arlo methods,'' in \emph{Sequential Monte Carlo Methods in
  Practice}.\hskip 1em plus 0.5em minus 0.4em\relax Springer, 2001, pp. 3--14.

\bibitem{doucet2000rao}
A.~Doucet, N.~De~Freitas, K.~Murphy, and S.~Russell, ``Rao-{B}lackwellised
  particle filtering for dynamic {B}ayesian networks,'' in \emph{Proceedings of
  the Sixteenth Conference on Uncertainty in Artificial Intelligence}.\hskip
  1em plus 0.5em minus 0.4em\relax Morgan Kaufmann Publishers Inc., 2000, pp.
  176--183.

\bibitem{Gustafsson2002Particle}
F.~Gustafsson, F.~Gunnarsson, N.~Bergman, and U.~Forssell, ``Particle filters
  for positioning, navigation and tracking,'' \emph{IEEE Transactions on Signal
  Processing}, vol.~50, no.~2, pp. 425--437, 2002.

\bibitem{dong2017}
F.~Dong, J.~Zhang, and K.~You, ``Maneuvering target tracking and motion
  estimation using vision-aid particle filter,'' in \emph{the 43rd Annual
  Conference of the IEEE Industrial Electronics Society}, 2017, pp. 6733--6738.

\bibitem{bertsekas1995dynamic}
D.~P. Bertsekas, \emph{Dynamic programming and optimal control}.\hskip 1em plus
  0.5em minus 0.4em\relax Athena scientific Belmont, MA, 1995, vol.~1, no.~2.

\bibitem{Liu2015Discrete}
S.~J. Liu and P.~P. Zhang, ``Discrete-time stochastic source seeking via tuning
  of forward velocity,'' \emph{IFAC Papersonline}, vol.~48, no.~28, pp.
  927--932, 2015.

\bibitem{Ren2004Trajectory}
W.~Ren and R.~W. Beard, ``Trajectory tracking for unmanned air vehicles with
  velocity and heading rate constraints,'' \emph{IEEE Transactions on Control
  Systems Technology}, vol.~12, no.~5, pp. 706--716, 2004.

\bibitem{Kang2009Linear}
Y.~Kang and J.~K. Hedrick, ``Linear tracking for a fixed-wing {UAV} using
  nonlinear model predictive control,'' \emph{IEEE Transactions on Control
  Systems Technology}, vol.~17, no.~5, pp. 1202--1210, 2009.

\bibitem{Beard2014Fixed}
R.~W. Beard, J.~Ferrin, and J.~Humpherys, ``Fixed wing {UAV} path following in
  wind with input constraints,'' \emph{IEEE Transactions on Control Systems
  Technology}, vol.~22, no.~6, pp. 2103--2117, 2014.

\bibitem{Dong2016A}
F.~Dong, X.~Lei, and W.~Chou, ``A dynamic model and control method for a
  two-axis inertially stabilized platform,'' \emph{IEEE Transactions on
  Industrial Electronics}, vol.~64, no.~1, pp. 432--439, 2017.

\bibitem{Lawrence2003Lyapunov}
D.~Lawrence, ``Lyapunov vector fields for {UAV} flock coordination,'' in
  \emph{2nd AIAA Unmanned Unlimited Conference, Workshop, and Exhibit}, 2003.

\bibitem{Khalil2002Nonlinear}
H.~K. Khalil, \emph{Nonlinear Systems (3rd Ed.)}.\hskip 1em plus 0.5em minus
  0.4em\relax Prentice Hall, 2002.

\bibitem{Du2016Chattering}
H.~Du, X.~Yu, M.~Z.~Q. Chen, and S.~Li, ``Chattering-free discrete-time sliding
  mode control,'' \emph{Automatica}, vol.~68, no.~C, pp. 87--91, 2016.

\bibitem{Arulampalam2002A}
M.~S. Arulampalam, S.~Maskell, N.~Gordon, and T.~Clapp, ``A tutorial on
  particle filters for online nonlinear/non-gaussian {Bayesian} tracking,''
  \emph{IEEE Transactions on Signal Processing}, vol.~50, no.~2, pp. 174--188,
  2002.

\bibitem{anderson1979optimal}
B.~D. Anderson and J.~B. Moore, \emph{Optimal Filtering}.\hskip 1em plus 0.5em
  minus 0.4em\relax Dover Publications, 2005.

\bibitem{Perruquetti2002Sliding}
W.~Perruquetti and J.~P. Barbot, ``Sliding mode control in engineering,''
  \emph{Dekker}, pp. 53--101, 2002.

\bibitem{Averbuch1991Radar}
A.~Averbuch, S.~Itzikowitz, and T.~Kapon, ``Radar target tracking-viterbi
  versus imm,'' \emph{Aerospace \& Electronic Systems IEEE Transactions on},
  vol.~27, no.~3, pp. 550--563, 1991.

\bibitem{Beard2012Small}
R.~W. Beard and T.~W. Mclain, \emph{Small Unmanned Aircraft: Theory and
  Practice}.\hskip 1em plus 0.5em minus 0.4em\relax Princeton University Press,
  2012.

\bibitem{small}
J.~Lee, \url{https://github.com/magiccjae/ecen674}.

\end{thebibliography}

\end{document}